\newtheorem{theorem}{Theorem}[section]
\newcommand*\diff{\mathop{}\!\mathrm{d}}
\begin{document}

\title[Phase balancing in the Kuramoto model]{Order parameter allows classification of planar graphs based on balanced fixed points in the Kuramoto model}

\author{Franz Kaiser}
\author{Karen Alim}
 \email{karen.alim@ds.mpg.de}
\affiliation{ 
Biological Physics and Morphogenesis Group, Max Planck Institute for Dynamics and Self-Organization, 37077 G\"ottingen, Germany
}
\affiliation{Institute for Nonlinear Dynamics, Faculty of Physics, University of G\"ottingen, 37077 G\"ottingen, Germany}

\date{\today}

\begin{abstract}
Phase {\it balanced states} are a highly under-explored class of solutions of the Kuramoto model and other coupled oscillator models on networks. So far, coupled oscillator research focused on phase synchronized solutions. Yet, global constraints on oscillators may forbid synchronized state, rendering phase balanced states as the relevant stable state. If for example oscillators are driving the contractions of a fluid filled volume, conservation of fluid volume constraints oscillators to balanced states as characterized by a vanishing Kuramoto order parameter.
It has previously been shown that stable, balanced patterns in the Kuramoto model exist on circulant graphs. However, which non-circulant graphs first of all allow for balanced states and what characterizes the balanced states is unknown. Here, we derive rules of how to build non-circulant, planar graphs allowing for balanced states from the simple cycle graph by adding loops or edges to it. We thereby identify different classes of small planar networks allowing for balanced states. Investigating the balanced states' characteristics, we find that the variance in basin stability scales linearly with the size of the graph for these networks. We introduce the \textit{balancing ratio} as a new order parameter based on the basin stability approach to classify balanced states on networks and evaluate it analytically for a subset of the network classes. Our results offer an analytical description of non-circulant graphs supporting stable, balanced states and may thereby help to understand the topological requirements on oscillator networks under global constraints.
\end{abstract}

\maketitle
\section{Introduction}
\label{sec:intro}
Weakly interacting systems are ubiquitous in nature. They play a crucial role for oscillations and timing in biological systems~\cite{Winfree2010}. Examples include circadian clocks, cell metabolism, chemical oscillations or pacemaker cells. Various models have been developed to describe such interacting systems in terms of coupled oscillators~\cite{Pikovsky2003,Arenas2008}. Here, the most well-studied model is the so-called Kuramoto model due to its tractability~\cite{Acebron2005,Strogatz2000,Kuramoto1984} and due to the possibility to derive Kuramoto-like models on very general grounds~\cite{Hoppensteadt2012,Izhi2006,Kuramoto1984}. 

Research regarding the Kuramoto model has focused mainly on oscillator phase synchronization phenomena~\cite{Acebron2005,Arenas2008,Doerfler2014}. Synchronization phenomena can be observed in a broad variety of oscillator network topologies and are comparably easy to study. Yet, little research has addressed so-called \textit{balanced states}, characterized by a vanishing Kuramoto order parameter.
These states have mainly been of interest in the field of control theory~\cite{Sepulchre2006,Paley2007,Paley2007b,jeanne2005}, where they represent desired states to coordinate autonomous vehicles. Yet, balanced states are of much broader concern as they describe coupled oscillators with global constraints that are preventing phase synchronization. For example, such a global constraint is given by conservation of fluid volume in a tubular network with periodically contracting tube walls.
A living example are the contraction patters on the networks formed by the slime mold \textit{Physarum polycephalum}~\cite{Alim2013}. Fascinatingly, these living networks are dynamic in their topology, which poses the question of how network topology affects phase balanced states.

The only class of network topologies so far known to support stable, balanced states in the Kuramoto model are circulant graphs~\cite{Sepulchre2006,Doerfler2014}. However, many real-world graphs are non-circulant, planar graphs. A dynamical model leading to balanced states independent of the underlying topology was introduced by Scardovi et al.~\cite{Scardovi2007} as a modification to the Kuramoto model, but it introduces an additional dynamical variable thus resulting in a different model. 
Therefore, it is still unknown which nontrivial planar network topologies support balanced states 

Here, we construct planar graphs that support stable, phase balanced states in the Kuramoto model.
Starting from the cycle graph as a prototype for graphs with balanced stable fixed points in the Kuramoto model, we first study the effect of loops added to such a graph and show how these may be chosen to preserve balanced states. A balanced state is only preserved if the loops represent the symmetries of the roots of unity found in many balanced states. Subsequently, we probe how additional edges change the balanced fixed points and identify the possible changes leading to balanced topologies, which are given by edges connecting two vertices of equal phase. In particular, we demonstrate how these two building blocks of balanced graphs may be combined to create a large network of Kuramoto oscillators having stable, balanced fixed points. For the graphs created this way, we then show that the variance of basin stability in terms of the winding numbers at stable fixed points scales linearly with the size of the graph. Finally, we introduce the \textit{balancing ratio} as a new measure that allows to compare different graphs in terms of their balanced fixed points using the basin stability approach. We are able to derive analytical scaling laws for this parameter for the examined balanced graphs using the scaling calculated for the variance. Thereby, we manage to capture the effect of loops or additional edges on balanced fixed points in cycle graphs in a quantitative manner. Our results show that topologies other than circulant graphs may support balanced fixed points and offer a new way of looking at balanced topologies.
\section{Methods}
\subsection{Theoretical methods - The Kuramoto model on complex networks}
\label{sec:kuramodel}
In order to study the Kuramoto model on networks, we will first introduce some tools from graph theory \cite{Godsil2001,Diestel2017}. Consider a graph $G(E,V)$ consisting of $N$ vertices with vertex set $V$ and edge set $E$. Then one may define its adjacency matrix $\bm{A}\in\mathbb{N}^{N\times N}$ by the components $A_{ij}$ representing the number of edges starting in vertex $i$ and ending in vertex $j$. Furthermore, one can define the \textit{incidence matrix} $\bm{B}\in\mathbb{Z}^{N\times N_E}$, where $N_E=\abs{E(G)}$ is the number of edges in $G$. It is defined by its components as 
\begin{align*}
B_{ij}=
\begin{cases}
1,&\text{if}~(j,i)\in E(G), \text{ i.e., } e_j\rightarrow v_i \\
-1,&\text{if}~(i,j)\in E(G), \text{ i.e., } e_j\leftarrow v_i\\
0,&\text{otherwise}
\end{cases},
\end{align*}
 where $e_j\rightarrow v_i$ denotes the fact that edge $j$ is incident in vertex $i$ and vice versa.
 Furthermore, the \textit{Laplacian matrix} $\bm{L}\in\mathbb{Z}^{N\times N}$ of a graph $G$ is defined by \begin{align*}
 \bm{L}=\bm{B}\bm{B}^T.
\end{align*}
A graph is called \textit{circulant}, if its Laplacian matrix is a circulant matrix. Circulant matrices are defined by the fact that each row is a cyclic permutation of the previous one. The eigenvalues $\lambda_j$ and eigenvectors $\bm{v}_j, j\in\{1,...N\}$ of the Laplacian matrix for circulant graphs take the simple form of the roots of unity which reads~\cite{Gray2006},
\begin{align}
\lambda_j&=\sum_{k=1}^N l_k (\rho_j)^k,\nonumber\\
\bm{v}_j&=\left(1,\rho_j,\rho_j^2,...,\rho_j^{N-1}\right)^T,~j\in\{0,...,N-1\},
\label{eq:eigcirc}
\end{align}
where $\rho_{j,N}=e^{i2\pi j/N}$
are the $N^\text{th}$ roots of unity, i.e.~the complex numbers such that $(\rho_{j,N})^N=1,~\forall j$. 

In this publication, we focus exclusively on so-called \textit{planar} graphs. A graph is called planar if there exists a drawing of the graph in which no two edges cross. A particularly simple example of a circulant planar graph is given by the \textit{cycle graph} or \textit{simple cycle}, which we refer to by $C_N$ for a cycle graph with $N$ vertices. A cycle is a path that starts and ends in the same vertex without passing through any other vertex twice. If the whole graph is given by a single cycle, it is referred to as a cycle graph.

We are now ready to express the equation of motion for the Kuramoto model on a graph $G$ with $N$ vertices as
\begin{align*}
\dot{\bm{{\vartheta}}}&=\bm{\omega}-K \bm{B}\sin(\mathbf{B}^T \bm{\vartheta}),
\end{align*}
where $\bm{B}$ is the graph's incidence matrix. Here, $\bm{\vartheta}=(\vartheta_1,...,\vartheta_N)^T\in T^N$ are the phase variables evolving dynamically on the $N$-torus $T^N$ over time, $\bm{\omega}=(\omega_1,...,\omega_N)^T\in\mathbb{R}^N$ is a vector of frequencies that is typically drawn from some frequency distribution $g(\bm{\omega})$ and $K$ is the coupling constant determining the strength of the mutual interaction between different oscillators. We assume this coupling constant to be identical for each interaction. In general, one may shift to a rotating frame such that $\bm{1}^T\bm{\omega}=N\cdot\langle\omega\rangle=0$. 


Our main focus will be the Kuramoto model with zero-frequency vector $\bm{\omega}=\bm{0}$. In this case, one may simply assume a coupling constant equal to unity $K=1$ which can
be achieved by rescaling time by the coupling constant $t^\prime=K\cdot t$. The new dynamics will then potentially take shorter or longer times to reach a stable fixed point, but will still follow the same dynamics. This leaves us with the following simplified equation of motion
\begin{align}
\dot{\bm{{\vartheta}}}&=-\bm{B}\sin(\mathbf{B}^T \bm{\vartheta}).
\label{eq:kurareduced}
\end{align}
We will be interested in \textit{fixed points} of this dynamics which are characterized by a vanishing time derivative of the phase variables $\dot{\bm{{\vartheta}}}=\bm{0}$. Identifying such fixed points which are balanced points at the same time is the main goal of this work.

For coupled oscillator models, one can define a measure of synchrony of oscillators as already introduced by Kuramoto~\cite{Kuramoto1984} which is given by 
\begin{align}
  \label{eq:orderparam}
p(\bm{\vartheta})=R(t)\cdot e^{i\langle \vartheta(t)\rangle}=\frac{1}{N}\sum_{j=1}^Ne^{i\vartheta_j(t)},
\end{align}
where $\langle \vartheta(t)\rangle=\frac{1}{N}\sum_{i=1}^N\vartheta_i(t)$ denotes the average angle. Typically, the length of this complex vector $R(t)=\abs{p(\bm{\vartheta})}$ is used as an order parameter for synchrony. This order parameter assumes values between zero for \textit{balanced states} \cite{Sepulchre2006,Sepulchre2005}, also termed \textit{incoherent}~\cite{Strogatz1994}, and one for complete phase synchronization.

Based on this order parameter, we define the set of balanced  states $\mathbb{B}(N)$ for a given number of oscillator $N$ by 
\begin{align}
\mathbb{B}(N)&:=\left\{\bm{\vartheta}\in\mathbb{T}^N\big|\sum_j^N e^{i\vartheta_j} =0\right\}
\label{eq:balanced}.
\end{align} 
Consequently, a planar graph on which the Kuramoto dynamics in Eq.(\ref{eq:kurareduced}) has a stable fixed point that is a balanced state will be called a \textit{balanced graph} in the following. The interplay between the existence and stability of fixed points in the Kuramoto model on the one hand - which is highly dependent on the underlying topology - and the state being balanced on the other hand - which is a requirement unrelated to the topology - will be the main interest of this work.

An observable that we will make use of in the following sections is the \textit{winding number} $q$. Assume that an orientation was assigned to the planar graph underlying the model resulting in an oriented graph $G^\sigma$ and its cycle basis, i.e.~a basis of the graph's cycles space consisting only of simple cycles \cite{Diestel2017}, is given by $\mathfrak{B}_{\mathcal{C}}=\{C_1,...,C_M\}$. 
Then the winding number for some cycle $C_k$ in the graph reads
\begin{align}
\label{eq:windingnumber}
q_{k}=\frac{1}{2\pi}\sum_{(i,j)\in C_k}\Delta_{ij},
\end{align}
where $(i,j)$ are edges composing the cycle and $\Delta_{ij}=\vartheta_i-\vartheta_j$ is the phase difference along that edge taken modulo $2\pi$ to project it onto the interval $\Delta_{ij}\in(-\pi,\pi]$. This winding number $q_k\in\mathbb{Z}$ assumes integer values since all phases along the path following the cycle need to be uniquely defined and in order to retrieve the phase at the starting point, the overall phase difference needs to total to zero modulo $2\pi$.
\subsection{Stable fixed points of the Kuramoto dynamics - the basin stability approach}
In order to be able to study stable balanced fixed points, we 
will classify a fixed point's stability based on the concept of \textit{basin stability} introduced in Ref.~\onlinecite{Menck2013}. It is based on the fixed point's \textit{basin of attraction} $\mathcal{B}$, which is the set of all initial conditions from which the system converges to the fixed point. The basin's volume may then be interpreted as the  probability of returning to the fixed point after a perturbation~\cite{Wiley2006,Menck2013}. In case of the Kuramoto model, the overall phase space volume reads $V=[0,2\pi]^N$. When estimating the basin of attraction $\mathcal{B}(\bm{\vartheta_0})$ of some fixed point $\bm{\vartheta_0}\in V$, the fixed point's indicator function needs to be evaluated in the whole phase space. It reads for some $\bm{\vartheta^\prime}\in V$
\begin{equation*}
\bm{1}_{\mathcal{B}(\bm{\vartheta_0})}(\bm{\vartheta^\prime})=
\begin{cases}
1, \text{ if } \bm{\vartheta^\prime}\in\mathcal{B}(\bm{\vartheta_0})\\
0, \text{ if } \bm{\vartheta^\prime}\not\in\mathcal{B}(\bm{\vartheta_0})
\end{cases}.
\end{equation*}
The basin stability $\mathcal{S}_{\mathcal{B}}(\bm{\vartheta_0})$ of some fixed point $\bm{\vartheta}_0\in\mathbb{T}^N$ is then simply given by the integral over this indicator function normalized by the overall phase space volume 
\begin{equation}
\mathcal{S}_{\mathcal{B}}(\bm{\vartheta_0})=\frac{\int_V \bm{1}_{\mathcal{B}(\bm{\vartheta_0})}(\bm{\vartheta^\prime})\diff^N \bm{\vartheta}^\prime}{(2\pi)^N}\in[0,1].
\label{eq:basinstability}
\end{equation}
A different notion typically used to find stable fixed points in dynamical systems is linear stability analysis.
It is based on the eigenvalues of the dynamical system's Jacobian matrix~\cite{khalil1996}. Based on this notion, one may calculate the following sufficient criterion for a fixed point to be linearly stable in the Kuramoto dynamics independent of the underlying topology~\cite{Jadbabaie2004,Sepulchre2006} 
\begin{align}
\cos(\Delta\vartheta_e)>0,~\forall e\in E(G),
\label{eq:fpstability}
\end{align}
where $\Delta\vartheta_e$ is the phase difference along some edge $e$ in the graph respecting its orientation.
\subsection{The roots of unity as a special class of balanced states}
\label{sec:phasebalancing}
Little knowledge is available about the mathematical structure of balanced states~\cite{Strogatz1994,Paley2007}.
Here, we will discuss particular solutions for balanced states formed by the roots of unity.

Consider a vector of phase variables forming the $N^\text{th}$ roots of unity $\bm{\vartheta}=(\vartheta_1,..,\vartheta_N)^T$, where 
$$\vartheta_j=\frac{2\pi j}{N},~j\in\{1,...,N\}.$$
A well known property of the roots of unity $\rho_{j,N}=e^{\vartheta_j}$ is the fact that they sum up to zero thus making the corresponding angles balanced states, see section~\ref{sec:rootssummation} in the appendix for a proof. Since this is true for all roots of unity, one may choose as well different subgroups of different roots of unity $\vartheta_j=\frac{2\pi j}{N_k}$, where $\sum_k N_k=N$, without changing the fact that they sum up to zero and are thus solutions of balanced states \cite{Alim2013}. Note that for the particularly simple case where $N=N_k$, i.e.~the case of evenly distributed oscillators, this set of angles is also referred to as \textit{splay states} in the literature and plays an important role in neuroscience~\cite{Hadley1987,Nichols1992}.

Importantly, the overall state remains balanced if one adds potentially time-dependent angles $\alpha_k(t)\in\mathbb{S}^1$ to each group of roots of unity summing up to zero separately 
$$\vartheta_j^*=\frac{2\pi j}{N_k}+\alpha_k(t).$$ 
Note that it is possible to choose different $\alpha_k$ for each group of $N_k^\text{th}$ roots of unity. Therefore, a subset $\mathbb{B}_R(N)$ of the set of balanced states reads
\begin{align}
 \mathbb{B}_R(N)&=\Big\{\bm{\vartheta}\in\mathbb{T}^N\big|\vartheta_j=\frac{2\pi j}{N_k}+\alpha_k(t),\label{eq:generalcom} \\
 &\sum N_k=N, N_k \in \mathbb{N},\alpha_k(t)\in\mathbb{S}^1\Big\}\subset\mathbb{B}(N).\nonumber
\end{align}
For small $N<5$, all balanced states may be created this way. The $N_k$ may in general either be constructed using the prime factors of the overall number of oscillators $N$, or any combination of prime numbers summing up to $N$ according to a theorem by Lam and Leung~\cite{LAM2000}.
We will show how these solutions may be realized on a graph-level.
\subsection{Computational methods - Monte Carlo method and cycle flows allow to identify stable fixed points}
\label{sec:methods}
In this work, we make use of two different methods to analyze the Kuramoto model's fixed points numerically. 

On the one hand, we use an algorithm presented recently by Manik et al.~\cite{Manik2017} that may be used to find stable fixed points in the Kuramoto model. It is based on the fact that different solutions to the equation characterizing fixed points in the Kuramoto model may only differ by constant flows around the underlying graph's cycles~\cite{Dorfler2013,Delabays2016,Manik2017}. In addition to that, it allows to establish a one-to-one correspondence between fixed points and winding numbers in the case where all phase differences between neighboring vertices are contained within a $\pi$-interval centered around the origin $\Delta_{ij}\in[-\pi/2,\pi/2],~i,j\in E(G)$~\cite{Delabays2016,Manik2017}. This also guarantees the fixed points to be stable due to the stability criterion in Eq.~\ref{eq:fpstability}. For many planar graphs, all stable fixed points for the Kuramoto model with all-equal frequencies will have phase differences contained in this interval as established by a theorem due to Delabays et al.~\cite{Delabays2017}. 

On the other hand, we use Monte Carlo sampling to determine a fixed point's basin stability.
The phase space in the Kuramoto model has a volume of $V=[0,2\pi]^N$ for $N$ oscillators. Using the Monte Carlo method, $M=10^5$ initial conditions were drawn at random from a uniform distribution spanning $V$ and the number of initial conditions that converged to each of the stable fixed points was counted to estimate the basin stability $\mathcal{S}_{\mathcal{B}}$ similar to the approach in Ref.~\onlinecite{DeVille2016}. Since this is a repeated Bernoulli experiment, it carries the standard error~\cite{Menck2013}, 
$$\sigma(\mathcal{S}_{\mathcal{B}}(\bm{\vartheta}_0))=\sqrt{(\mathcal{S}_{\mathcal{B}}(\bm{\vartheta}_0)\cdot(1-\mathcal{S}_{\mathcal{B}}(\bm{\vartheta}_0)))/M}$$
where $\bm{\vartheta}_0\in\mathbb{T}^N$ is the fixed point of interest and $\mathcal{S}_{\mathcal{B}}(\bm{\vartheta}_0)$ its basin stability.
One can easily see that this error reaches its maximal value $\sigma(0.5)\approx 2\cdot 10^{-3}$ at a basin stability of $\mathcal{S}_{\mathcal{B}}(\bm{\vartheta}_0)=0.5$ given the number of initial conditions used here  $M=10^5$. 

Integration of the Kuramoto equation of motion was performed using a standard ODE solver contained in the \textsc{scipy} package with stepsize $\text{dt}=0.5$. Convergence to a stable fixed point was ensured by checking that the last change was small and calculating the Jacobian eigenvalues. 
When evaluating the balancing ratio, all fixed point were counted as balanced if their order parameter was smaller than a graph-specific threshold value. This threshold value was calculated based on the order parameter at the smallest non-balanced fixed point for the given graph, which varies a lot between different graphs.

Using these two approaches, we could identify all stable fixed points that occupy a significant amount of the overall basin stability and may thus be found by the Monte Carlo method or have all phase differences contained in the above interval and are therefore identified by the algorithm. In addition to that, we were able to find analytical approximations for the phase differences at stable fixed points which we used to recheck in specific cases that we did not miss stable balanced fixed points, which, however, did not reveal any new balanced fixed points not accounted for by either of the two methods.
\section{Results}
\subsection{Creating non-circulant balanced graphs from cycle graphs}
\label{sec:buildingblocks}
We will build non-circulant balanced graphs starting with the simplest example of planar, circulant graphs, the cycle graph and the 3-regular prism graph. The cycle graph is denoted $C_N$, where the index represents the number of vertices in the graph $N$. A cycle is a path that starts and ends in the same vertex consisting of distinct vertices $V(C_N)=\{v_1,...,v_N\}$ and edges of the form $E(C_N)=\{(v_1,v_2),$ $(v_2,v_3),...,(v_N,v_1)\}$~\cite[p.8]{Diestel2017}. If the whole graph is given by one cycle, we refer to the graph as a cycle graph. For this graph and the version of the Kuramoto equation of motion~\eqref{eq:kurareduced} used here, the phase differences at fixed points of the dynamics may be calculated analytically. These phase differences at stable fixed points read~\cite{jeanne2005}
$$
\Delta\vartheta_{c}(q_c)=\frac{2\pi q_{c}}{N},~q_{c}\in [-\lfloor N_c/4\rfloor,\lfloor N_c/4\rfloor]\subset \mathbb{Z},
$$
where $q_{c}$ are the winding numbers characterizing the fixed point and the domain represents stable fixed points.

The 3-regular prism graphs is denoted $Y_N$, where the index once again represents the number of vertices in the graph. They may be created by connecting two cycle graphs to each other at pairs of vertices. For these graphs, phase differences at stable fixed points take the same form as for the cycle graph with the crucial difference that the winding number $q_Y$ characterizing stable fixed points is now smaller $q_{Y}\in [-\lfloor N_Y/8\rfloor,\lfloor N_Y/8\rfloor]$. This is due to the fact that the number of stable fixed points is limited by the number of vertices in the central cycle which consists of half of the overall number of vertices. 

Importantly, the stable fixed points in these two topologies are all balanced except for the synchronized state. Starting from this observation, we will now construct other, non-circulant balanced graphs.
\subsubsection{Only a few building blocks may be used to create balanced graphs from cycle graphs}
\label{sec:balfromcycle}
\begin{figure}[tb!]
\centering
\includegraphics[width=.5\textwidth]{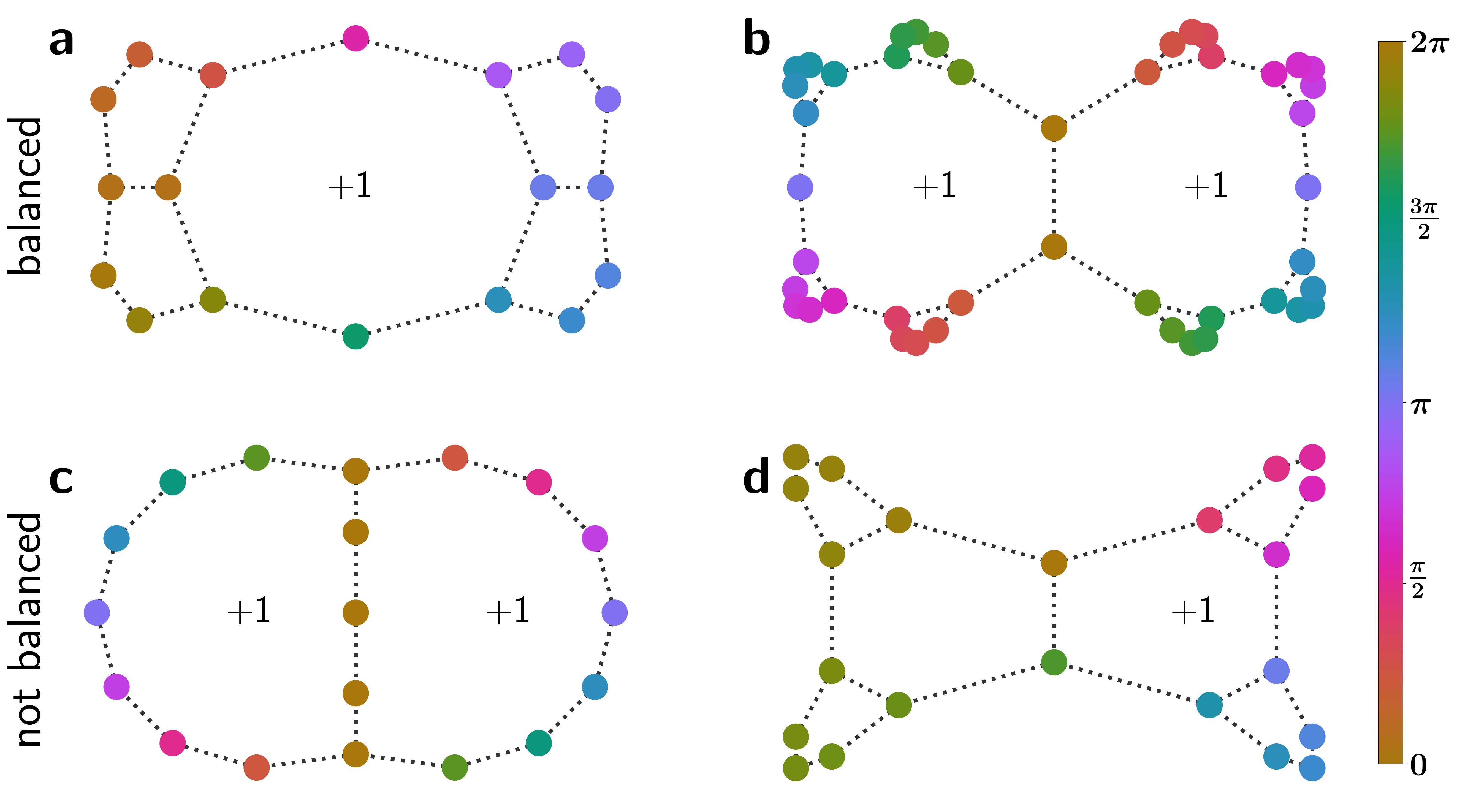}
\caption{Balanced graphs may be constructed by making use of the two different building blocks.
(a) Balanced graph representing the symmetries of the fourth roots of unity with respect to the corresponding winding number $q_1=+1$ in the spectral drawing. (b) Graph consisting of two cycles connected to each other but with additional loops added in such a way, that they are symmetric with respect to each of the big faces with winding numbers
$q_{2,1}=q_{2,2}=+1$. (c) Two cycle graphs connected at more than one edge do no longer possess any stable balanced fixed points. Fixed point shown with $q_{3,1}=q_{3,2}=1$ is balanced if cycles are only connected at a single edge. (d) A graph that is not symmetric with respect to the winding number $q_{4,1}=+1$ of the big face. The stable state shown is not balanced. Faces with no winding number indicated carry winding number zero. All graphs shown here have automorphism groups with basis elements being rotation and reflection along horizontal axis.}
\label{fig:balex}
\end{figure}
As we have seen, all circulant graphs are balanced graphs including the simple cycle. For this reason one may try to construct other balanced graphs by modifying the cycle graph. However, most modifications will make the graphs non-balanced. In Fig.~\ref{fig:balex}, two examples of balanced and non-balanced graphs created from balanced graphs are shown to highlight the difficulties of finding appropriate modifications. The graphs shown in panels c and d are non-balanced whereas the graphs in panels a and b are although their overall symmetries in terms of graph automorphisms are the same. They are given by elementary rotation and reflection along the horizontal symmetry axis.

Here, we will show how one may add edges to cycle graphs, or add loops to them such that the resulting graphs are non-circulant, but still balanced graphs. To achieve this, we will make use of our knowledge about a special class of balanced states constructed from the roots of unity as defined in Eq.~\ref{eq:generalcom}. In general, we found two different ways of modifying cycle graphs that lead to balanced graphs which we will discuss in the following paragraph.
\subsubsection{Loops added symmetrically to a cycle graph may result in a balanced graph}
\label{sec:loopycyclegraph}
\begin{figure}[tb!]
\centering
\includegraphics[width=.5\textwidth]{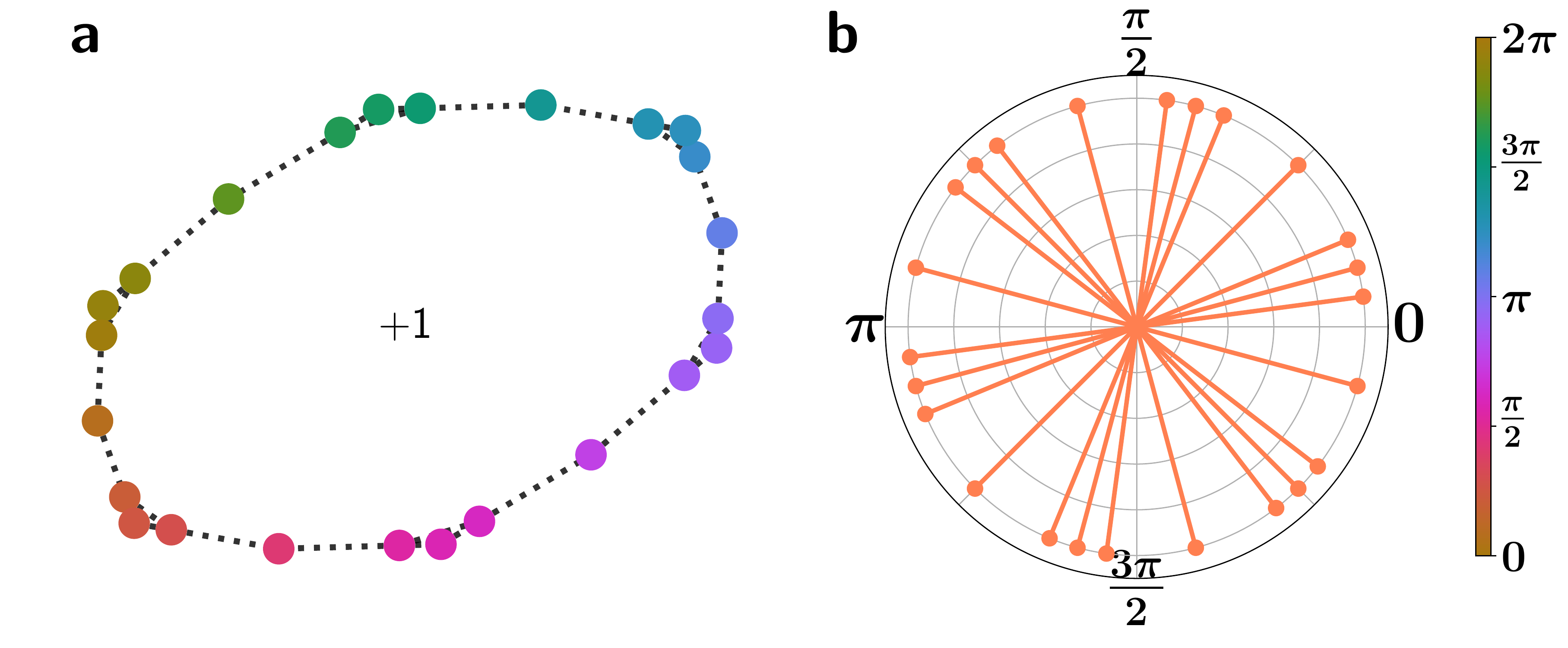}
\caption{(a) The cycle graph $C_{18}$ to which $k=6$ loops of size $m=1$ have been added with angular variables showing a stable, phase balanced fixed point (colorbar). The fixed point has winding number $q=1$ in the central cycle. (b) Angular variables corresponding to the fixed point shown in (a) plotted on the unit circle. The angles in each of the loops are visible and form roots of unity themselves resulting in a balanced fixed point.}
\label{fig:loopcycle_example}
\end{figure}
An obvious way to modify cycle graphs is given by adding loops consisting of several vertices to the graph. Yet, most such operations will destroy the balancing properties of fixed points. The only way to add loops to the graph preserving this property is a symmetric addition of loops reflecting the symmetries of balanced states. For example, if one was to add $k$ loops consisting of $m$ vertices each to a cycle graph with a total of $N$ vertices in such a way that the resulting graph will be balanced, we found that one may in general only do so by choosing $k$ as an integer divisor of $N$ $k|N=n \in\mathbb{N}$. In this case, one may choose the loops to be equidistant in terms of the vertices between them, which corresponds to the roots of unity in angular space. An example representing the $6^\text{th}$ roots of unity on the cycle graph $C_{18}$ is shown in Fig.~\ref{fig:loopcycle_example}. Panel (a) shows the cycle graph to which $k=6$ loops consisting of $m=1$ vertex each have been added and the corresponding color code represents phase variables at this stable fixed point of the Kuramoto dynamics. Panel (b) shows the corresponding angular variables on the unit circle at this fixed point, which makes the one-to-one correspondence between topology and the roots of unity characterizing it obvious.
\subsubsection{Adding edges to balanced graphs may preserve balanced states}
\label{sec:connect_balanced}
\begin{figure}[tb!]
\centering
\includegraphics[width=.5\textwidth]{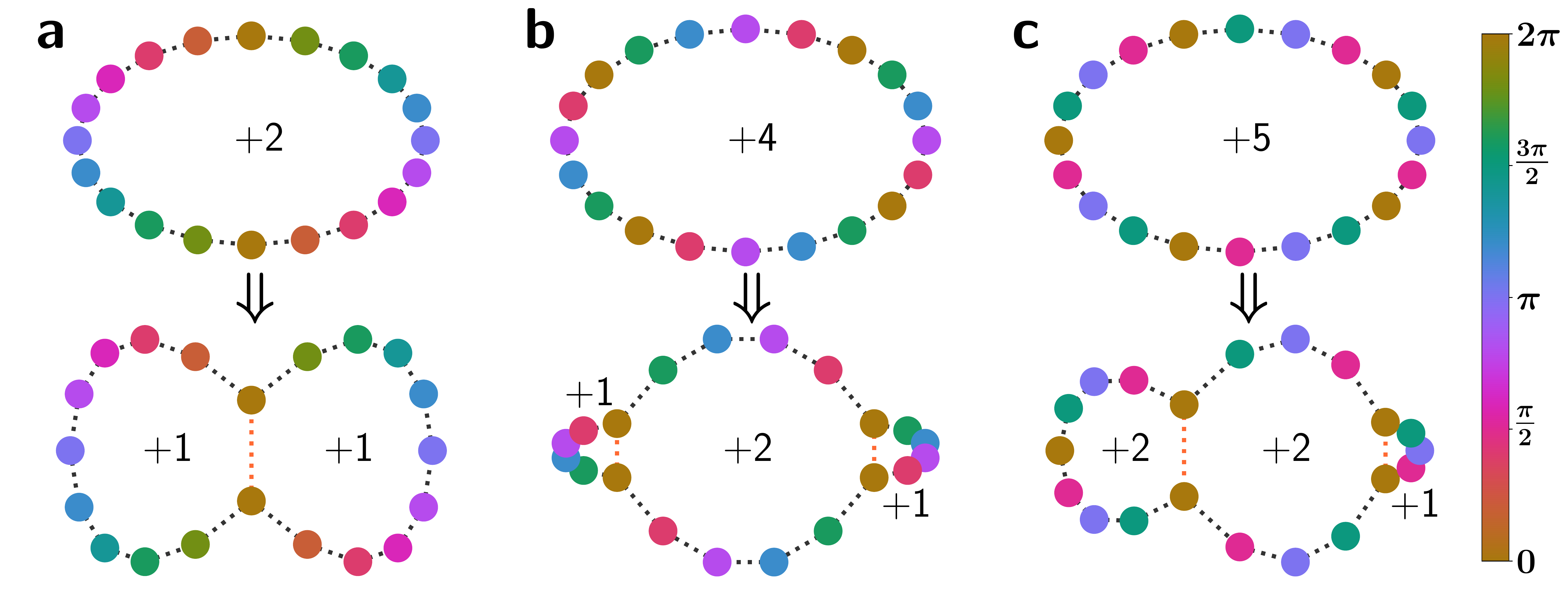}
\caption{Additional edges (red, dotted lines) added to cycle graphs may not change the overall stability properties of stable, balanced fixed points. (a) A  stable, balanced fixed point in a cycle graph (top) with winding number $q_1=+2$ (center of graph) does not loose stability upon addition of an edge connecting two vertices having the same angular value (bottom). The overall winding number is distributed equally over the two resulting graphs (bottom) $q_{1,1}=q_{1,2}=+1$ (center of the graphs). (b) The same procedure may be applied for higher order fixed points (top) by connecting again vertices of equal phases (bottom). The overall winding number $q_2=+4$ is again distributed on the different resulting graphs with $q_{2,1}=+1, q_{2,2}=+2$ and $q_{2,3}=+1$ from left to right. (c) The same mechanism can be applied starting from winding number $q_3=+5$ (center) and connecting vertices of same phase. The last graph's automorphism group has only one element which is the reflection along the horizontal axis, whereas all other graphs have an additional rotational symmetry.}
\label{fig:cyclebalanced}
\end{figure}
Suppose a balanced graph $G$ was found along with the corresponding stable balanced fixed point $\bm{\vartheta}$ characterized by some winding number with absolute value larger than one, $|q|>1$, in one of the graph's cycles. 
If there are two vertices that have exactly the same phase variable $\vartheta_i=\vartheta_j$ then one may add an edge to the graph connecting these two vertices $E(G^\prime)=E(G)\cup\{(i,j)\}$. The corresponding fixed point in the new graph $G^\prime$ will still be a stable, balanced fixed point. 

One may see that the new fixed point is still stable as follows; suppose that $\bm{J}(\bm{\vartheta})\in\mathbb{R}^{N\times N}$ is the Jacobian matrix characterizing the fixed point in the old system before adding an edge with eigenvalues $\lambda_k(\bm{J})<0,~\forall k\in\{1,...,N\}$, which guarantees the fixed point's linear stability by using linear stability analysis~\cite{khalil1996}. 
If the edge added to the graph reads $(i,j)\in E(G^\prime)$, then the Jacobian at the new fixed point $\bm{J}^\prime(\bm{\vartheta}^\prime)$ differs from the old one by a matrix $\bm{J}^\prime(\bm{\vartheta}^\prime)=\bm{J}(\bm{\vartheta})+\bm{j}$ which has the trivial entries $\bm{j}_{kl}=1\cdot(\delta_{ki}\delta_{lj}+\delta_{kj}\delta_{li})-1\cdot(\delta_{ki}\delta_{li}+\delta_{kj}\delta_{lj})$, i.e.~a symmetric matrix with non-zero entries only at the entries characterizing the new edge. The entry unity is due to the fact that the new edge has zero phase difference such that $\cos(0)=1$. This matrix has only one non-zero eigenvalue $\lambda({\bm{j}})=-2$. Using Weyl's inequality for the eigenvalues of symmetric matrices~\cite{Weyl1912}, one can see that the Jacobian eigenvalues $\lambda_k^\prime$ at the new fixed point fulfill $\lambda_k^\prime\leq\lambda_k+0$ due to the fact that zero is the maximal eigenvalue of the additional matrix $\bm{j}$. Thus, all Jacobian eigenvalues at the new fixed point are negative as well and the fixed point is stable.

On the other hand, the fixed point stays balanced since the new fixed point has exactly the same phase variables $\bm{\vartheta}^\prime=\bm{\vartheta}$ and the balancing condition is independent of topology. An example is shown in Fig.~\ref{fig:balex},b, where an edge was added to a cycle graph with loops. Using this procedure, we made use of the fact that one may subdivide the overall balancing property into individual subgroups summing to zero separately, see section~\ref{sec:phasebalancing}.

If the underlying graph is a cycle graph, this procedure may be applied
if the winding number is an integer divisor of $N$ $q|N=m\in\mathbb{N}$. Then there are groups of $q$ vertices having the same angular value since $q=N\cdot \varphi/(2\pi)$, where $\varphi=\vartheta_{i}-\vartheta_{i-1},~\forall i$, implies that $\vartheta_i=\vartheta_{i+m}\operatorname{mod} 2\pi$, i.e.~each angular value is repeated at every $m$-th position.
 It is important to note that although the addition of edges as described above does not change the stability property of the corresponding fixed point it will affect the basin stability of a fixed point as new fixed points are created by adding cycles to the graph. In Fig.~\ref{fig:cyclebalanced}, examples of cycle graphs to which edges have been added without destabilizing the corresponding balanced fixed points are shown. Balanced fixed points in the cycle graph $C_{20}$ characterized by winding numbers $q=2$ (a), $q=3$ (b) and $q=5$ (c) do not become unstable if edges are added connecting vertices having the same phase variable (red dotted lines). The winding number is distributed in correspondence with the number of vertices to the new cycles in the modified graphs. 

Importantly, although all graphs shown here have non-trivial automorphism groups, the two building blocks discussed here may also be combined to create graphs with no trivial symmetry properties, see section~\ref{sec:nonsymbal} in the appendix.
\subsection{Towards a general classification of balanced graphs using the balancing ratio}
\label{sec:results}
Now that we showed how a class of balanced graphs may be constructed, we are going to quantify how balanced they are. In general, there is little knowledge available about the structure of balanced fixed points in balanced graphs apart from the fact that the Kuramoto order parameter vanishes at such points. To be able to compare different balanced graphs in an easily accessible way, we introduce the \textit{balancing ratio} as a new measure based on the basin stability approach. We define this as \textit{the fraction of basin stability occupied by all balanced states in a graph $G$ taken together} 
\begin{equation}
b(G)=\sum_{\bm{\vartheta}\in\mathbb{B}_G(N)}\mathcal{S}_\mathcal{B}(\bm{\vartheta})\in[0,1].
\label{eq:BalancingRatio}
\end{equation}
Here, $\mathbb{B}_G(N)$ is the set of balanced states consisting of $N$ angles, now referring exclusively to the stable balanced states in the graph $G$ as indicated by the subscript and defined by Eq.~\ref{eq:balanced}. This measure may be understood as the probability of ending up in a balanced state when starting from a randomly chosen state. Importantly, most graphs will have vanishing balancing ratio $b(G)=0$ since they are not balanced. In practice, the balancing ratio may be calculated by performing repeated Monte Carlo experiments and counting the number of times a balanced state is reached.
\subsubsection{Variance of basin stability scales linearly with number of vertices for many balanced topologies}
\begin{figure}[tb!]
\centering
\includegraphics[width=0.5\textwidth]{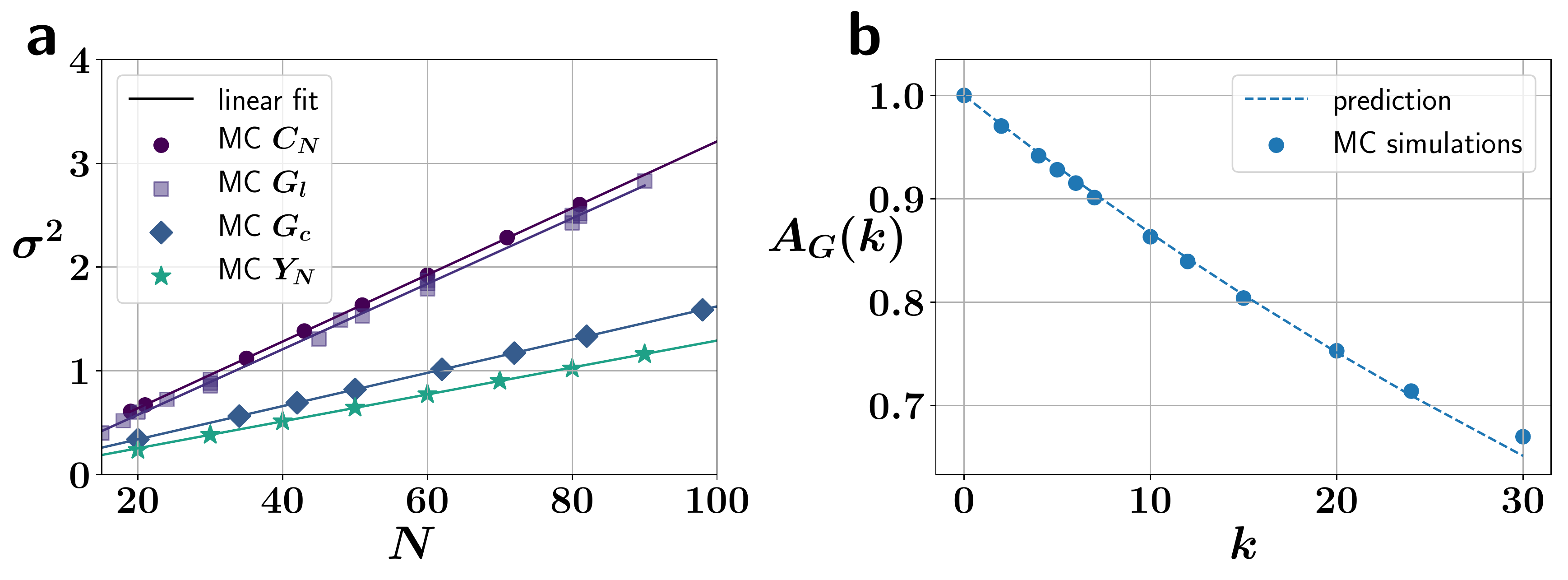}
\caption{(a) Scaling of the variance $\sigma^2$ characterizing the Gaussian distribution describing the basin stability with the number of vertices $N$ in different graphs. Linear fits (dotted lines) and data points from Monte Carlo sampling (markers) are shown for cycle graphs $C_N$ (dark purple dots), cycle graphs with loops $G_l$ (light purple squares), a cycle graph with a central edge added $G_c$ (blue diamonds) and prism graphs $Y_N$ (green stars). Fit parameters for linear relationship $\sigma^2(N)=m\cdot N +b$ are $m_{C_N}=3.2\cdot 10^{-2}, b_{C_N}= -1.7\cdot 10^{-3}$, $m_{G_l}=3.1\cdot \cdot 10^{-2}, b_{G_l}= -4.9\cdot 10^{-2}$, $m_{G_c}=1.6\cdot N\cdot 10^{-2}, b_{G_c}= -1.1\cdot 10^{-3}$ and $m_{Y_N}=1.3\cdot 10^{-2}, b_{Y_N}= -1.0\cdot 10^{-2}$ from top to bottom. Note that for $G_l$, fit was performed in terms of the number of vertices in the central cycle.
(b) Scaling of share of basin stability occupied by the stable fixed points with all winding numbers in outer loops equal to zero $A_G(k)$ in dependence of the number of loops $k$ added to a cycle graph. Data points from Monte Carlo sampling represented by dots and analytical approximation as dotted line are shown. Errors in both figures take a maximal value of $\sigma(\mathcal{S}_\mathcal{B}(\bm{\vartheta}))=2\cdot 10^{-3}$ for each dot and are thus too small to be visible in the figure.
}
\label{fig:sigmacycles}
\end{figure}
In order to calculate this balancing ratio of a circulant graph for an arbitrary number of vertices in the graph, we need to derive an expression for the basin stability of the graph's fixed points.
The basin stability $\mathcal{S}_\mathcal{B}(q_c)$ in dependence of the winding number $q_c$ for the fixed points in the Kuramoto model on a cycle graph is known to follow a Gaussian distribution $\mathcal{S}_\mathcal{B}(q)=\frac{1}{\sqrt{2\pi\sigma^2}} e^{-q^2/(2\sigma^2)}$ and different explanations for this scaling have been suggested~\cite{Wiley2006,Delabays2017b}. 
 As we found here, the variance $\sigma^2_{C_N}(N)$ of this Gaussian distribution scales to good approximation linearly with the number of vertices $N$ as shown in Fig.~\ref{fig:sigmacycles}a (dark purple line), i.e.,
 $$\sigma^2_{C_N}(N)=3.2\cdot 10^{-2}\cdot N -1.7\cdot 10^{-3}.$$
 This fit was calculated by fitting a linear function to the results obtained using Monte Carlo sampling and calculating the variance in the winding number distribution \footnote{Alternatively, one might calculate the variance by fitting the integral under the Gaussian $\mathcal{S}_\mathcal{B}(q)\stackrel{!}{=}(\sqrt{2\pi}\sigma)^{-1}\int_{q-0.5}^{q+0.5}e^{-x^2/(2\sigma^2)} \diff x$ as is done in Ref.~\cite{Wiley2006}. However, we found the given approach to yield better results when comparing it to results from Monte Carlo sampling.}. For moderate values of $N$, the scaling is $\sigma^2_{C_N}(N)=3.2\cdot 10^{-2}\cdot N$ effectively since the y-intercept is negligible. 
 
In addition to the linear scaling observed for the variance in case of the cycle graph, we found this scaling to hold for other graphs as well. For the prism graph $Y_N$, the variance scales as $\sigma^2_{Y_N}(N)=(1.3\cdot N-1.0)\cdot 10^{-2}$ if fitted linearly to the data. This result coincides very well with the data points obtained from simulation, see Fig.~\ref{fig:sigmacycles},a, green stars. 

Another example for which we found the linear scaling to hold is the graph created by adding a single edge to a cycle graph thus creating two cycles of the same size connected at a single edge as described in section~\ref{sec:connect_balanced} and shown in Fig.~\ref{fig:cyclebalanced} a, bottom. In this case, the basin stability follows a two-dimensional Gaussian scaling $$\mathcal{S}_\mathcal{B}(q_1,q_2)=\frac{1}{2\pi\sigma_1\sigma_2} e^{-1/2((q_1/\sigma_1)^2+(q_2/\sigma_2)^2)},$$
if plotted against the winding numbers of the two cycles $q_1$ and $q_2$. If the two cycles have the same number of vertices $n=N/2+1$, the Gaussian distribution is well-described by a single standard deviation $\sigma_1=\sigma_2=\sigma$. This variance shows the same scaling as for the single cycle if plotted against the number of vertices in each of the cycles $n$. In terms of the number of vertices $n$ in a cycle, it reads $$\sigma^2_{G_c}(n)=(3.2\cdot n-1.1)\cdot 10^{-2}.$$
This scaling is shown in terms of the overall number of vertices $N$ in Fig.~\ref{fig:sigmacycles} a (blue).

Finally, this scaling is also valid for graphs to which loops have been added according to the scheme described in section~\ref{sec:loopycyclegraph}. The variance of the Gaussian scaling observed in terms of the winding number was evaluated for the winding number in the central loop. In terms of the number of vertices in this central loop $N^*=N-k\cdot m$, where $N$ is the overall number of vertices in the graph, $k$ is the number of loops added and $m$ is the number of vertices in each loop, the overall variance shows a similar scaling compared to the cycle graph
$$
\sigma^2_{G_l}(N^*)=(3.1\cdot N^*-4.9)\cdot 10^{-2}.
$$
This scaling was obtained using a linear fit on the Monte Carlo results of $18$ different graphs with $k\in\{2,3,5\}$ loops and different number of vertices. It is shown in Fig.~~\ref{fig:sigmacycles} a (purple).
\subsubsection{Balancing ratio for circulant graphs shows a simple square root scaling}
Inspired by the numerical results, we will now proceed to show how one may calculate the balancing ratio in circulant graphs.
Importantly, all states except for the synchronized state with winding number $q_c=0$ are balanced for circulant graphs. Therefore, the balancing ratio is calculated by summing over all stable states' basin stability except for the synchronized state which is the peak of the Gaussian distribution. This peak is of height $(\sqrt{2\pi}\sigma_{C_N}(N))^{-1}$ for a normalized Gaussian distribution where $\sigma_{C_N}(N)$ is the Gaussian's standard deviation. Using the linear scaling for the variance $\sigma^2_{C_N}(N)$, we get the following expression for the balancing ratio of a cycle graph $C_N$ in dependence of the overall number of vertices in the graph
\begin{align}
b(C_N)&=\sum_{\abs{q}\neq 0}\mathcal{S}_\mathcal{B}(q)=1-\left(\sqrt{2\pi}\sigma_{C_N}(N)\right)^{-1}\label{eq:basstabcycle}\\
&\approx 1-\left(\sqrt{2\pi(3.2\cdot 10^{-2} N)}\right)^{-1}\nonumber.
\end{align} 
Here, we neglected the y-intercept due to its negligible effect on the result.
This approximation is shown for different numbers of vertices $N$ in Fig.~\ref{fig:StabilityAll} (purple). It coincides very well with the results from Monte Carlo sampling.

Along the same lines, one may calculate a similar scaling law for the prism graph $Y_N$. Again plugging in the linear scaling obtained for the variance, the balancing ratio for this graph is calculated to be
\begin{align}
b(Y_N)&=\sum_{\abs{q}\neq 0}\mathcal{S}_\mathcal{B}(q)=1-\left(\sqrt{2\pi}\sigma_{Y_N}(N)\right)^{-1}\label{eq:basstabprism}\\
&\approx 1-\left(\sqrt{2\pi(1.3\cdot N-1.0)\cdot 10^{-2})}\right)^{-1}\nonumber.
\end{align}
Here, $q$ is the winding number in the central cycle of the prism graph.
This approximation is shown in Fig.~\ref{fig:StabilityAll} (dark green line) again coinciding with the results obtained from Monte Carlo sampling.
\subsubsection{Balancing ratio depends on newly created cycles when adding edges to cycle graphs}
Having solved balancing ratios for simple circulant graphs, we now consider the balanced graph obtained by adding a single edge to a cycle graph $C_N$ such that each of the newly created cycles has the same number of vertices $n=N/2+1$. We will refer to this graph by $G_c$. This graph is the most simple balanced graph one may obtain using the above building block of adding edges to balanced graphs and thus provides a natural extension of cycle graphs. Exemplarily, such a graph is shown in Fig.~\ref{fig:cyclebalanced} a, bottom. Making use of the results obtained for the cycle graph, we will show how one may calculate the balancing ratio for this graph. 

Fixed points in this graph are characterized by the winding numbers in the two cycles $\bm{q}=(q_1,q_2)^T$. In this case, the only balanced states that we found were states where the two cycles have the same winding number, i.e.~opposite flows at the shared edge, which are precisely the fixed points described in section~\ref{sec:connect_balanced} and in one-to-one correspondence with the balanced fixed points in the cycle graph without additional edge. Here, we assume the cycles and edges to be both oriented counterclockwise and the edge shared between the cycles to be oriented along the first cycle characterized by $q_1$.
We define the set characterizing all states where the two cycles have identical winding number by
$$Q_c=\{\bm{q}\in\mathbb{D}_c(\bm{q})\big|q_1=q_2,q_1\neq 0\}.$$
Here, $\mathbb{D}_c(\bm{q})=\left[-\lfloor\frac{n}{4}\rfloor,\lfloor\frac{n}{4}\rfloor\right]^2\subset \mathbb{Z}^2$ is the domain where the fixed points are guaranteed to be stable, but not all combinations of winding numbers in this set may necessarily be realized in a stable fixed point. Again, the synchronized state is not balanced for this topology and thus excluded from this set.
Using this definition, the balancing ratio may be calculated as
$$
b(G_c)=\sum_{\bm{\vartheta}\in\mathbb{B}_{G_c}(N)}\mathcal{S}_\mathcal{B}(\bm{\vartheta})=\sum_{\bm{q}\in \mathbb{D}_c(\bm{q})}\mathcal{S}_\mathcal{B}(\bm{q})\delta_{q_1,q_2}-\mathcal{S}_\mathcal{B}((0,0)^T),
$$
where $\delta_{q_1,q_2}$ is the Dirac delta such that $\delta_{q_1,q_2}=1,$ if $q_1=q_2$ and $0$ otherwise. This sum may be evaluated by making use of the fact that the basin stability in terms of the winding numbers follows a two-dimensional Gaussian distribution and the scaling of this Gaussian's variance in terms of the number of vertices $n$ is again linear. This allows us to evaluate the balancing ratio for the graph $G_c$ 
\begin{align}
b&(G_c)=\sum_{\bm{q}\in \mathbb{D}_c(\bm{q})}\frac{1}{2\pi\sigma^2}e^{-(q_1^2+q_2^2)/(2\sigma^2)}\delta_{q_1,q_2}-\frac{1}{2\pi\sigma^2}\nonumber\\
&=\sum_{q\in[1,\lfloor n/4\rfloor]}\frac{1}{\pi\sigma^2}e^{-q^2/\sigma^2}
\label{eq:basstabconnection}\\
&=\sum_{q\in[1,\lfloor n/4\rfloor]}\frac{1}{\pi(3.2\cdot n -1.1)\cdot 10^{-2}}e^{-q^2/(3.2\cdot n -1.1)\cdot 10^{-2}}\nonumber.
\end{align}
This result is shown in Fig.~\ref{fig:StabilityAll}, (dark blue) and agrees very well with the result obtained from Monte Carlo sampling. Comparing this result to the one obtained for the cycle graph $C_N$ in Eq.~\ref{eq:basstabcycle}, one may notice that it is always smaller by a prefactor of $(2\pi)^{-1/2}$. In addition to that, the variance $\sigma^2(N)$ shows a more moderate scaling with the number of vertices for the given graph and the sum scales with $\sigma^{-2}\varpropto N^{-1}$ in the prefactor compared to the $\sigma^{-1}\varpropto N^{-1/2}$ scaling observed for the cycle graph. This explains the different scalings observed in Fig.~\ref{fig:StabilityAll}.
\subsubsection{Balancing ratio for cycle graph with loops depends on graph symmetries}
\label{ap:balratioloopcycle}
In this section, we will show how a similar result may be obtained for cycle graphs to which loops have been added. Consider a cycle graph to which $k$ loops have been added by adding $m=3$ vertices to the graph for each of the loops and connecting them to a single edge of the graph, thus creating loops consisting of $m+2=5$ vertices, see e.g.~Fig.~\ref{fig:SummationBalancing_small} b, for an example with two loops. Each such loop results potentially in the creation of two new stable fixed points characterized by the two non-zero loop winding numbers $q_l\in\{1,-1\}$. For this reason, stable fixed points with all phase differences contained in the interval $[-\pi/2,\pi/2]$ centered around the origin may be uniquely characterized using the winding vector $\bm{q}=(q_1,q_{l,1},...,q_{l,k})^T$, where $q_1$ is the winding number in the central loop and the $q_{l,k}$ are the $k$ winding numbers characterizing the loops.

In order to be able to calculate the balancing ratio for such graphs, we will first study the effects of loops on the overall basins stability occupied by the newly created fixed points. The basin stability occupied by such a stable state in which exactly one loop has a non-zero winding number stays roughly constant over all numbers of vertices and loops with the average value being
$$\mathcal{S}_\mathcal{B}(q_l=1)=\mathcal{S}_\mathcal{B}(q_l=-1)=(7.1\pm 0.3)\cdot 10^{-3},$$
where the given error is the standard deviation over all samples. This value was obtained for cycle graphs $C_{120}$ to which $k\in\{0, 2, 4, 5, 6, 7,10,12, 15, 20,  24,30\}$ loops have been added, but was also confirmed for much smaller cycle graphs $C_{30}$ with loops added where a similar value was obtained. Thus, the probability of finding a non-zero winding number in one of the loops when starting from a random initial condition for the given graphs reads 
\begin{align}
p(\abs{q}_l=1)=2\cdot\mathcal{S}_\mathcal{B}(q_l=1)=(1.42\pm 0.06)\cdot 10^{-2}.
\end{align}
This allows us to estimate the average basin stability occupied by states with a non-zero winding number in dependence of the number of loops $k$ by 
\begin{align}
A_G(k)=(1-p(\abs{q}_l=1))^k,
\label{eq:areanonzero}
\end{align}
due to the fact that the probability of observing a zero winding number in a given loop reads $1-p(\abs{q}_l=1)$. Here, we assumed the probability of observing non-zero winding numbers in different loops to be independent.
\begin{figure}[tb!]
\centering
\includegraphics[width=.5\textwidth]{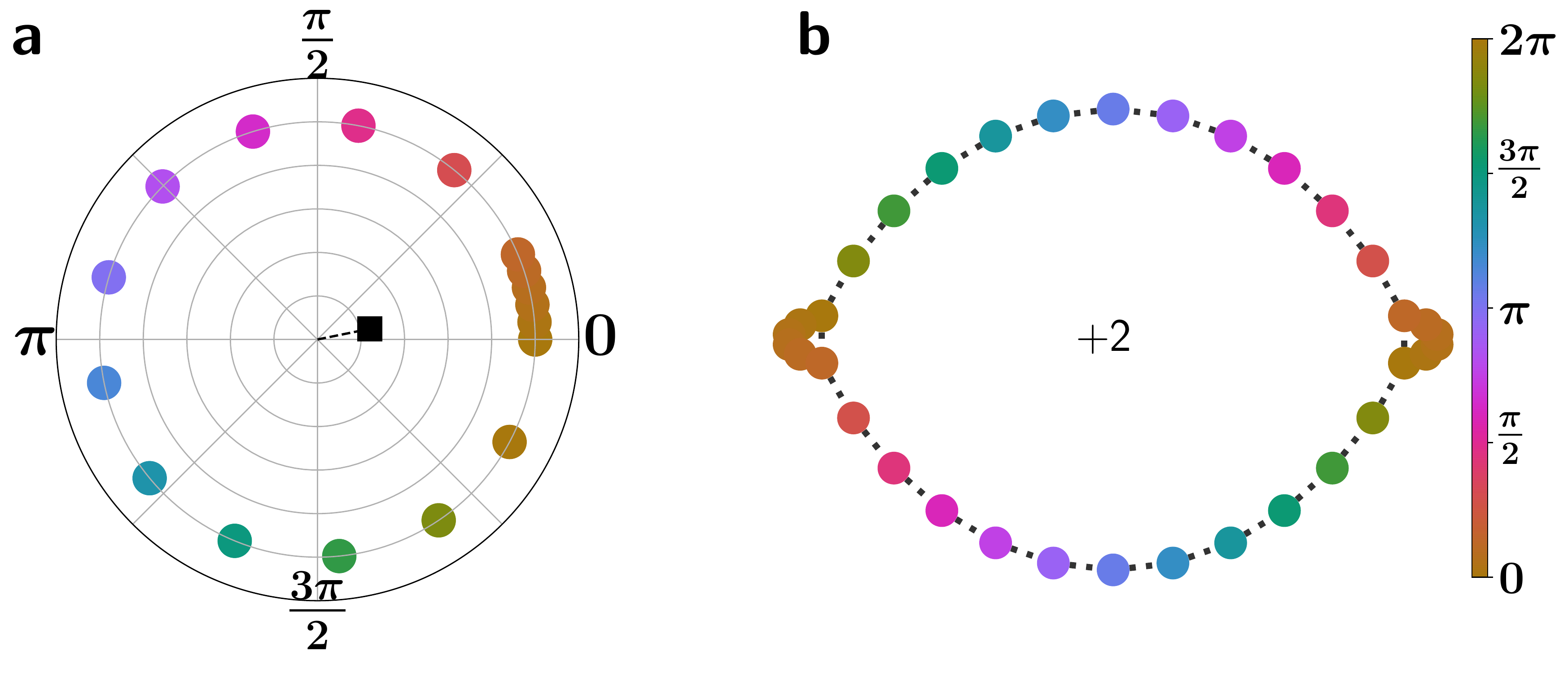}
\caption{Winding numbers multiple of the number of loops induce asymmetries in the phase variables at stable fixed points. (a) Phase variables at the stable fixed point depicted in (b). The asymmetry towards $\vartheta=0$ is clearly visible and indicated by the Kuramoto order parameter in Eq.~\ref{eq:orderparam} (black square), which assumes a non-zero value for this configuration. (b) Cycle graph $C_{28}$ to which $k=2$ loops have been added. The stable fixed point shown here with central winding number $q=2$ is not balanced.}
\label{fig:SummationBalancing_small}
\end{figure}

In order to calculate the balancing ratio for this topology, we need to derive an expression for the basin stability occupied by balanced states. In general, such cycle graphs with loops allow for a lot of different balanced states with non-zero winding numbers in some of the loops according to their symmetry properties, see section~\ref{ap:balratioloopcycle} in the appendix for a detailed discussion. However, in the given setting where loops consist only of a small number of vertices in comparison to the main cycle, the basin stability taken by such states is negligible. For this reason, we will focus on balanced states where all loops have a zero winding number. Compared to the balancing ratio for the cycle graph, basin stability for such graphs will then be reduced by the above factor $A_G(k)$.

In addition to that, the symmetry of the added loops makes a few previously available balanced fixed points now non-balanced ones. This is the case if the winding number is a multiple of the number of loops. To formalize this argument, define the following set characterizing the central loop's winding numbers at balanced fixed points in dependence of the number of loops added to the graph $k$ by 
\begin{align*}
Q_{l,1}(k)=\{q_1\in\left[-\lfloor\frac{N}{4}\rfloor,\lfloor\frac{N}{4}\rfloor\right]\big| \nexists n_1\in\mathbb{N}_{0} \text{ s.t. }\abs{q_1} = n_1\cdot k\}.
\end{align*}
In Fig.~\ref{fig:SummationBalancing_small}, a stable fixed point that is balanced in the corresponding cycle graph, but became non-balanced due to the loops is shown. The figure shows the cycle graph $C_{28}$ to which $k=2$ loops have been added making the stable fixed point with $q_1=2$ now non-balanced. This is due to an asymmetry in the phase variables towards $\vartheta=0$ making the resulting fixed point non-balanced as indicated by a non-zero order parameter, Fig.~\ref{fig:SummationBalancing_small} a, (black square). For this reason, such fixed points need to be excluded when calculating the balancing ratio.

Using these two results, we are now ready to write down an estimate for the basin stability in a graph with loops
\begin{align}
b&(G_l,k)=\sum_{\bm{\vartheta}\in\mathbb{B}_{G_l}(N)}\mathcal{S}_\mathcal{B}(\bm{\vartheta})\approx \sum_{q_1\in Q_{l,1}(k),\bm{q}_l=\bm{0}}\mathcal{S}_\mathcal{B}(\bm{q})\nonumber\\
&= A_G(k)\cdot(1-\sum_{q\in Q_{l,1}}\frac{1}{\sqrt{2\pi\sigma^2}}e^{-q^2/(2\sigma^2)})\label{eq:basstabloops}\\
&\approx (1-1.42\cdot 10^{-2})^k\cdot\left(1-\sum_{q\in Q_{l,1}}\frac{e^{-q^2/(6.4\cdot 10^{-2}\cdot N^*)}}{\sqrt{2\pi \cdot 3.2\cdot 10^{-2}\cdot N^*}}\right). \nonumber
\end{align}
Here, $N^*=N-k\cdot m$ refers once again to the number of vertices in the central cycle without loops.
Analyzing this expression, one may notice the relationship to the basin stability of a simple cycle in Eq.~\ref{eq:basstabcycle}. This expression differs from the one for the simple cycle in terms of a factor accounting for the basin stability in states with non-zero winding numbers in loops and in terms of several winding numbers other than the synchronized state being now excluded from the summations due to the additional symmetries induced by the loops. Importantly, this result will always yield lower values of basin stability in comparison to the corresponding cycle graph.
This approximation is shown in Fig.~\ref{fig:StabilityAll} (light purple) for $k=2,3,5$ along with the corresponding results from Monte Carlo sampling where the number of loops for the respective graph is indicated by a purple number.
\subsubsection{Balancing ratio allows to compare balanced states in different types of balanced graphs}
\begin{figure*}[tb!]
\centering
\includegraphics[width=1.0\textwidth]{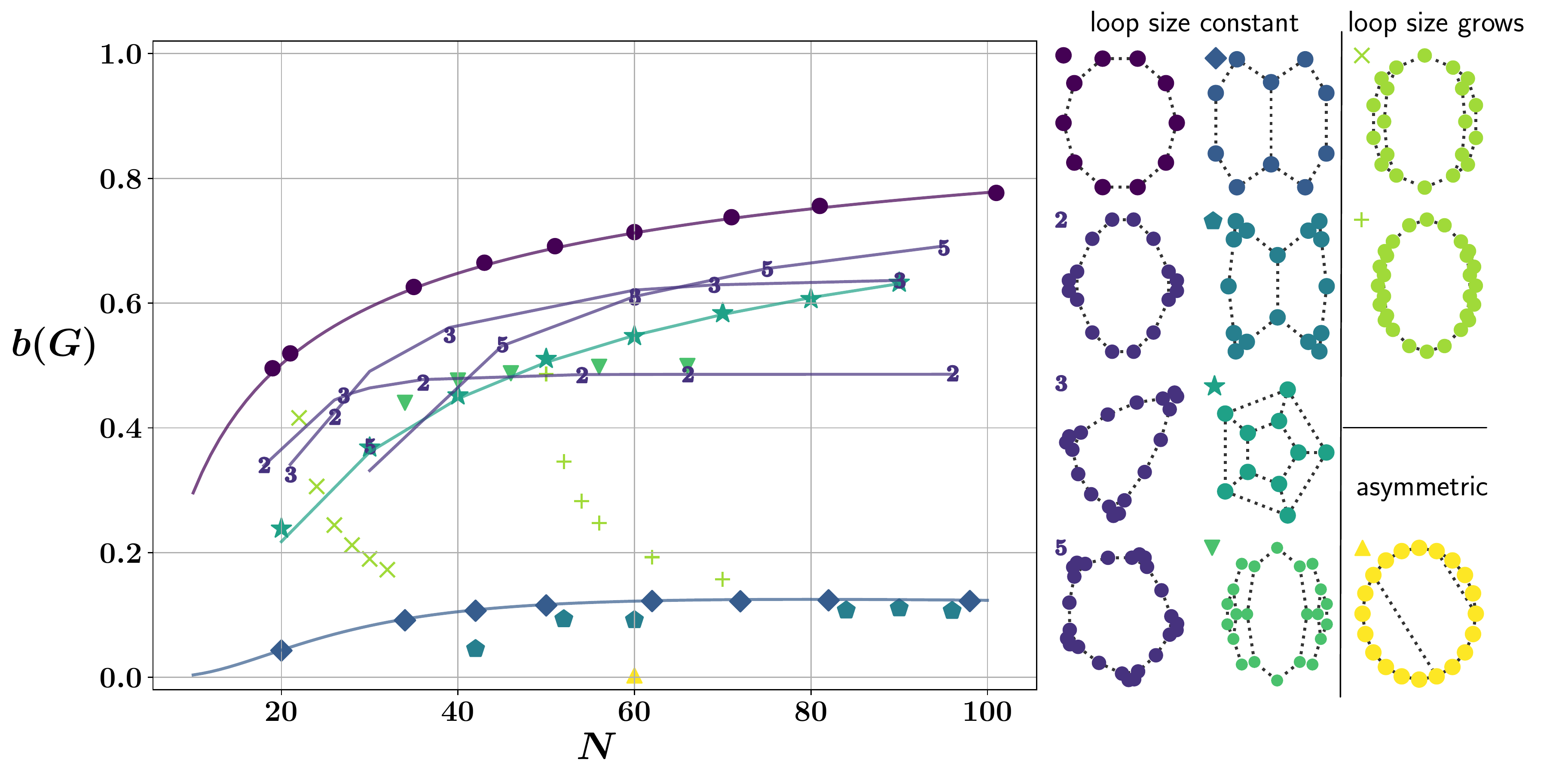}
\caption{Balancing ratio $b(G)$ plotted for different balanced graphs against total number of vertices in graph $N$. Color code and symbols represent different prototypes for the graphs analyzed here as shown in legend, right. For the cycle graph with loops purple numbers $2$, $3$ and $5$ correspond to the number of loops added. For graphs on the left side of the legend, only the central cycle grows with increasing $N$ whereas the added loops - if they exist - stay constant in size. For graphs on the right side of the legend, loops grow accordingly with increasing $N$ (top). Graph at the bottom right of the legend marked by yellow color is the only asymmetric graph analysed here. Results marked by symbols were obtained using Monte Carlo sampling performed for respective graph except for analytical results indicated by colored lines. See text for detailed information on how the respective graph was constructed exactly. Errors take a maximal value of $\sigma(\mathcal{S}_\mathcal{B}(\bm{\vartheta}))=2\cdot 10^{-3}$ for each dot and are thus too small to be visible in the figure.}
\label{fig:StabilityAll}
\end{figure*}
Finally, the balancing ratio may also be calculated numerically using the Monte Carlo method for different types of balanced graphs for which there are no analytical results available so far. To compare the results for those and the previously discussed topologies among each other, we plotted the balancing ratio $b(G)$ against the overall number of vertices in the graph $N$ in Fig.~\ref{fig:StabilityAll}. Different colors and symbols encode different types of graphs as indicated by different prototypes shown in the legend on the right.

The cycle graph appears as an upper bound on the balancing ratio for a given number of vertices $N$ out of all balanced graphs. This might be due to the simple structure of cycle graphs making the balanced states most easily accessible for random initial configurations and the fact that there is only one non-balanced fixed point for this topology. 

Furthermore, we note that all of the graphs analyzed here show a monotonic scaling with the number of vertices in the graph. While most graphs show an increase in the balancing ratio with the number of vertices in the graphs, cycles with long loops added and growing in vertices by growing loop vertices as well show a monotonic decrease with the overall number of vertices,  see symbols $\cross$ and $+$ and light green color in Fig.~~\ref{fig:StabilityAll}. These graphs were constructed by adding two loops of equal lengths $l$ to both sides of a cycle graph symmetrically with the loops spanning $l-1$ of the cycle's vertices such that the additional cycles now share more than one edge with the central cycle. For the graphs represented by symbol $\cross$, loops are of lengths $l\in\{2,3,4,5,6,7\}$ added to the cycle graph $C_{18}$ resulting in graphs of sizes $N\in\{22,24,26,28,30,32\}$, respectively. Other copies of this kind of graph are marked by symbol $+$ and created from the cycle graph $C_{46}$ with higher numbers of vertices $N\in\{50,52,54,56,62,70\}$ and loops of lengths $l\in\{2,3,4,5,8,12\}$ and show similar values of balancing ratios and a similar scaling. These graphs show a clear decrease with the length of the loops added which can be explained by noticing that the length of the loops corresponds to the number of possible fixed points in the graphs due to the increase in cycle length. The new fixed points will occupy some basin stability themselves such that the balanced fixed points occupy less and less basin stability with increasing length of loops. This scaling was confirmed by some preliminary analysis counting the share of fixed points being balanced. The upper bound given by the corresponding cycle graphs $C_{46}$ and $C_{18}$ is reached if the length of the loops is reduced to zero.

Furthermore, the graphs represented by symbol $\blacktriangledown$ and light green color represent cycle graphs to which more complicated structures have been added on two symmetric ends resulting in three additional cycles at both sides. This graph shows the same trend as the simple cycle graphs to which loops have been added, which are shown as light purple numbers. This is due to the fact that these graphs are similar to the latter graphs but different in terms of the number of cycles in the graph.

In terms of the effect additional loops have on balanced states in cycle graphs, it is easily visible that they result in a lower balancing ratio compared to the corresponding cycle, see graphs represented by purple numbers $2$, $3$ and $5$ indicating the number of loops. Counterintuitively, more loops added to the graph might lead to a higher balancing ratio than fewer loops, although the former represent a stronger perturbation to the circular topology. This is due to the fact that more states with lower winding number remain balanced if the number of loops is increased, which take a higher amount of basin stability compared to the states with higher winding numbers. This effect compensates the additional factor $A_G(k)$ for increasing numbers of vertices in the graph leading to higher balancing ratios for the graphs with more loops. However, the reduction in the balancing ratio for any number of loops added symmetrically to the graph is always smaller than adding a central edge to the graph which leads to a much stronger decrease in the balancing ratio, see blue markers $\diamond$. All loops added here consist of $m=3$ additional vertices.

For the graph formed by a cycle graph to which a central edge \textit{and} loops have been added, represented by symbol $\pentagon$ and cyan color, the two small cycles were of size $n\in\{13,20,36\}$ each and $k\in\{2,3,4\}$ loops of size $m=3$ have been added to each of the cycles resulting in the points shown at $N\in\{42,52,60,84,90,96\}$, respectively. In contrast to the results for the single cycle to which loops have been added, the graphs show similar values of balancing ratios as for the corresponding graph created from adding a central edge to the cycle graph. 

Finally, the graph represented by yellow color and symbol $\blacktriangle$ is the only asymmetric graph shown here. This graph is created by adding two edges to the cycle graph $C_{60}$ asymmetrically, thus resulting in a graph with cycles of length $31$, $21$ and $11$. The only balanced fixed points for this graph are characterized by winding number $|q|=6$ in the corresponding cycle graph. It shows a small, non-vanishing balancing ratio of around $b(G)\approx 3\cdot 10^{-3}$.
\section{Summary and Outlook}
\label{sec:discussion}
In this work, we showed how additional loops and edges affect stable, balanced fixed points in the Kuramoto model on cycle graphs. To do so, we used the roots of unity as a particular class of balanced states to construct graphs that support stable balanced states.
We examined the basin stability of stable fixed points in the constructed topologies as a measure of stability and used the results to evaluate the balancing ratio as a new order parameter for balanced graphs. In order to quantify the effect of additional connections on balanced states, we evaluated this parameter both analytically and numerically for different topologies. Our results show that there are numerous non-circulant balanced topologies that may be constructed using simple building blocks. In general, the addition of loops or edges to cycle graphs was shown to worsen the graph's balancability, with the effect of adding a central edge to the balanced graph being much stronger than the one of additional loops.

The balancing ratio introduced here as an order parameter for balanced graphs provides an easily accessible tool to classify stability in balanced graphs. We offer an analytical description of this parameter for a subset of balanced topologies which might help to yield more insight into the structure of balanced fixed points in general. On the other hand, the balancing ratio may be easily calculated for any topology using numerical methods. It also offers the simple interpretation of the probability of ending up in a balanced state. This could be relevant for real-world networks if balanced states describe a desired mode of operation.

Using two basic building blocks, we showed how a large class of balanced graphs may be constructed. In principle, the building blocks shown here might also be used to create graphs representing tilings of two dimensional space but still being balanced. An important task for future work would be to find a classification for the building blocks on more mathematical grounds and to check which planar balanced graphs exist that may not be produced using the above schemes. Furthermore, one might move away from planar graphs and graphs with small degree being the main focus in this work and look for classification schemes for balanced graphs in general. 

We found that the standard deviation of basin stability in terms of the winding numbers scales linearly with the number of vertices in the graph for many topologies. This result yields new insights about the winding number distribution of stable states in cycle graphs and beyond.
However, it is unclear how the prefactor $3.2\cdot 10^{-2}$ found to describe the slope of the variance $\sigma^2$ in terms of the number of vertices in a cycle graph $N$ may be explained theoretically. Neither is there a theoretical explanation for the probability of finding a state with non-zero winding number in one of the loops $p(\abs{q}_l=1)$ nor a scaling of this number with number of loops $k$, size of loops $m$ or number of vertices $N$. One possible way to continue the study of these graphs would be to quantify this probability for different sizes of loops and relate them to the Gaussian scaling observed for connected cycles of the same size or single cycles. On the other hand, it would be interesting to see if the linear scaling of the Gaussian's standard deviation with the number of vertices found here to be valid for two circulant graphs extends to other circulant graphs - or even other topologies. Using this scaling and combining different graphs, it could be possible to analytically calculate the basin stability of single fixed points for much larger graphs.

Although most of the networks constructed show an obvious symmetry in correspondence with the roots of unity, we were not able to relate our findings to results on cluster synchronization in coupled oscillator models in relationship to graph symmetry~\cite{Pecora2014,Schaub2016} or symmetries in the master stability function~\cite{Dahms2012}. It would be an important goal for future works to be able to relate the results found here to graph symmetries in the underlying networks.

The present work is the first one to study planar graphs under global constraints that manifest through phase balancing of oscillators on more general grounds. The building blocks studied here allow to create large networks of oscillators supporting balanced states, providing the mathematical framework to understand constrained oscillators in nature but also to encode control in robotics and autonomous vehicles. The methods introduced here further allow to quantify the stability of the balanced states in such networks - for many networks even analytically. 
\section*{Acknowledgments}
This work was funded by the Max Planck Society.
%

\appendix
\section{Non-symmetric balanced networks}
\begin{figure}[htb]
\centering
\includegraphics[width=0.5\textwidth]{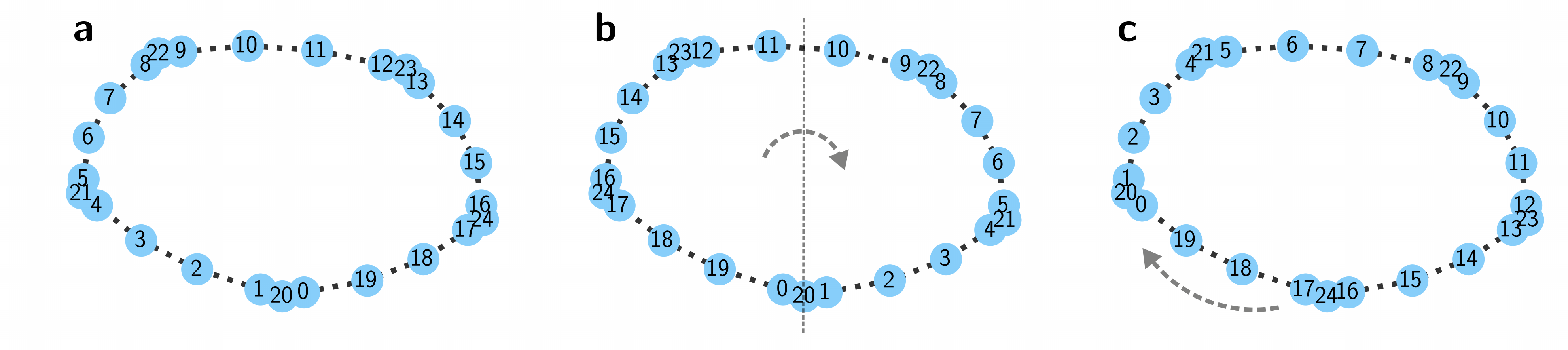}
\caption{Automorphisms for a cycle-like graph with $k=5$ loops symmetrically added to the graph and a total of $N=25$ vertices. Numbers on vertices (light blue circles) label vertices uniquely which are connected through edges (dotted, black lines). The graph displayed (a) is modified by applying two different automorphisms, namely the two basis elements of its automorphism group. (b) The first basis element is a reflection along a symmetry axis of the graph (straight, dotted line, reflection indicated by arrow). (c) The second one is a rotation (grey, dotted arrow) by a length corresponding to the distance between individual loops. Together, these symmetries represent the dihedral symmetries.}
\label{fig:automorphisms}
\end{figure}
\subsection{Graph automorphisms and symmetries}
\label{sec:automorphisms}
An important tool to study dynamical systems evolving on graphs - and in particular balanced states as possible dynamics - are graph symmetries. \textit{Graph automorphisms} in this section being the major tool used for studying symmetries on graphs. We will introduce them following Ref.~\cite[ch. 1]{Diestel2017}. Consider two graphs $G$ and $G^\prime$ with vertex sets $V(G)$, $V(G^\prime)$ and edge sets $E(G)$ and $E(G^\prime)$. A map $\varphi:V\rightarrow V^\prime$ between their vertex sets is a \textit{homomorphism} between the two graphs if it preserves the adjacency of vertices $(v_i,v_j)\in E(G)\Rightarrow (\varphi(v_i),\varphi(v_j))\in E(G^\prime)$. It is an \textit{isomorphism} if the opposite is true as well, i.e.~the map is bijective and its inverse is a homomorphism as well, which implies $(\varphi(v_i),\varphi(v_j))\in E(G^\prime)\Rightarrow (v_i,v_j)\in E(G)$. Finally, an \textit{automorphism} is an isomorphism from $G$ to itself. The set of all automorphisms of a graph forms a group. Since automorphisms are bijective, the combination of two automorphisms is an automorphism once again and the map preserving each vertices' position, i.e.~the identity map, is an automorphism itself which shows their group structure.

The main graphs of interest in this publication are cycle graphs and its modifications. The basis elements forming the automorphism group of the cycle graph with loops added symmetrically are shown in Fig.~\ref{fig:automorphisms}. Panel (a) shows the cycle graph $C_{20}$ to which $k=5$ loops of size $m=1$ have been added symmetrically. Panel (b) and (c) show the basis elements of the graph's automorphism group, namely an elementary reflection along one of the graph's symmetry axis (b) and an elementary rotation corresponding to the distance between different loops (c).
\subsubsection{Non-symmetric balanced graphs}
\label{sec:nonsymbal}
\begin{figure}[tb!]
\centering
\includegraphics[width=.5\textwidth]{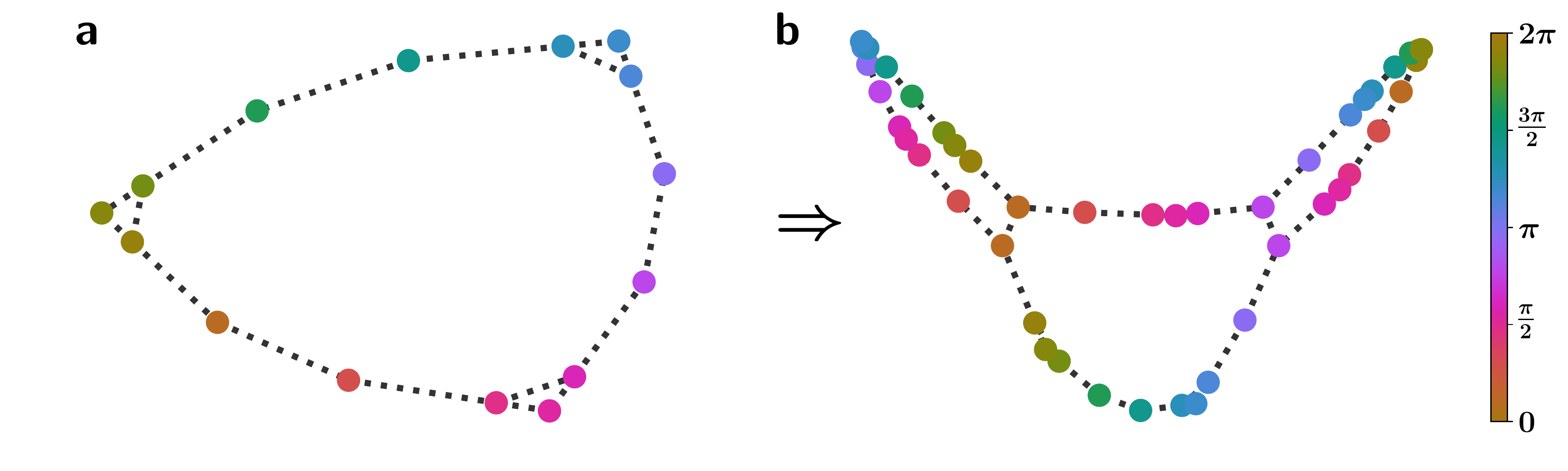}
\caption{A balanced graph with empty automorphism group may be created by combining the two building blocks of adding loops to cycle graphs and connecting copies of the resulting graphs. (a) A balanced graph with $N=12$ vertices and three symmetrically placed loops. (b) A balanced graph with $N=36$ vertices created by joining together three copies of the graph in (a) resulting in a graph with empty automorphism group.}
\label{fig:nonsymmetric}
\end{figure}
Besides the types of graphs discussed in section~\ref{sec:buildingblocks}, the building blocks may be used to construct balanced graphs with empty automorphism group, i.e. graphs without any graph symmetries. This can easily be achieved by combining the building blocks in such a way that the added graphs do no longer have trivial symmetry properties. A simple example obtained by connecting three cycle graphs with loops in an asymmetric way is shown in Fig.~\ref{fig:nonsymmetric}. In general, this shows that automorphism groups alone cannot be used to classify balanced graphs.
\section{Set of balanced states for $N<5$}
In this section, we show that for $N<5$, all balanced states may be constructed making use of the roots of unity. 

Consider the case $N=2$. We will show that the balanced states are given by the second roots of unity, i.e.~that $\mathbb{B}(2)=\left\{\vartheta_1=\alpha,\vartheta_2=\alpha+\pi,~\alpha\in\mathbb{S}^1 \right\}$.
\begin{proof}
We have
\begin{align*}
e^{i\vartheta_1}+e^{i\vartheta_2}=0 \Rightarrow e^{i(\vartheta_2-\vartheta_1)}=-1\\
\Rightarrow \vartheta_2-\vartheta_1=\pi,
\end{align*}
This means that in general, we have $\vartheta_1=\alpha_1\Rightarrow\vartheta_2=\alpha_1 + \pi$, which are the angles building the second roots of unity.
\end{proof}

Along the same lines, the set of balances states for $N=3$ is given by
\\$\mathbb{B}(3) =\left\{\vartheta_j=\frac{2\pi j}{3}+\alpha(t) ,\alpha(t)\in\mathbb{S}^1, j\in\{1,2,3\} \right\}$, i.e.~the third roots of unity.
\begin{proof}
Fixing $\vartheta_1=0$ without loss of generality, we get
\begin{align*}
&e^{i\vartheta_2}+e^{i\vartheta_3}=-1 \\
&\Rightarrow \sin(\vartheta_2)+\sin(\vartheta_3)=0\land \cos(\vartheta_2)+\cos(\vartheta_3)=-1\\
&\Rightarrow (\vartheta_2=-\vartheta_3~\lor~\vartheta_2=\pi+\vartheta_3) \land \cos(\vartheta_2)+\cos(\vartheta_3)=-1.
\end{align*}
Only the first solution, namely $\vartheta_2=-\vartheta_3$, is compatible with the second equation as $\vartheta_2=\pi+\vartheta_3$ yields $\cos(\vartheta_2)+\cos(\vartheta_3)=\cos(\pi+\vartheta_3)+\cos(\vartheta_3)=\cos(\vartheta_3)-\cos(\vartheta_3)=0\neq -1$. Thus we arrive at $2\cos(\vartheta_2)=-1$ which implies $\vartheta_2=\pm \frac{2\pi}{3}$, i.e.~$\vartheta_3=\mp \frac{2\pi}{3}$, which are the angles corresponding to the third roots of unity.
\end{proof}

Finally, a similar result can be proven for $N=4$. We will show that the balanced states are given by pairs of second roots of unity, i.e.~that \begin{align*}
\mathbb{B}(4)&=\Big\{\vartheta_j=\vartheta_k+\pi,\vartheta_l=\vartheta_m+\pi;j,k,l,m \in\{1,2,3,4\}\\
& \wedge \text{ indices non-equal}\Big\}.
\end{align*}

\begin{proof}
We have the following requirements on the angular variables
\begin{align*}
    e^{i\vartheta_1}+e^{i\vartheta_3}&=-(e^{i\vartheta_2}+e^{i\vartheta_4})~\tag{P1}\\
    e^{-i\vartheta_1}+e^{-i\vartheta_3}&=-(e^{-i\vartheta_2}+e^{-i\vartheta_4})~\tag{P2}.
\end{align*}
Now assume without loss of generality that $e^{i\vartheta_1}+e^{i\vartheta_3}\neq 0$ and $e^{i\vartheta_1}+e^{i\vartheta_4}\neq 0$. If one of the two equations held, we would have $\vartheta_3-\vartheta_1=\pi$ or $\vartheta_4-\vartheta_1=\pi$, respectively, using the proof for the second roots of unity and would thus be done. Now one can make use of the equality $e^{-i\vartheta_1}+e^{-i\vartheta_3}=(e^{i\vartheta_1}+e^{i\vartheta_3})/(e^{i\vartheta_1}\cdot e^{i\vartheta_3})$
to obtain the following result from equation (P2)
$$\frac{e^{i\vartheta_1}+e^{i\vartheta_3}}{e^{i\vartheta_1}\cdot e^{i\vartheta_3}}=\frac{e^{i\vartheta_2}+e^{i\vartheta_4}}{e^{i\vartheta_2}\cdot e^{i\vartheta_4}}.$$
Together with equation (P1), this implies that $$e^{i(\vartheta_2+\vartheta_4)}=e^{i(\vartheta_1+\vartheta_3)}$$ must hold since $e^{i\vartheta_1}+e^{i\vartheta_3}\neq 0$. Along the same lines, one may show that $$e^{i(\vartheta_1+\vartheta_4)}=e^{i(\vartheta_2+\vartheta_3)}$$ is true. Multiplying these two equations with each other yields the identities $e^{2i\vartheta_2}=e^{2i\vartheta_1}$ and $e^{2i\vartheta_3}=e^{2i\vartheta_4}$. This implies the two equalities $\vartheta_2=\vartheta_1 \lor\vartheta_2-\vartheta_1=\pi$ and
$\vartheta_4=\vartheta_3 \lor\vartheta_4-\vartheta_3 =\pi$. Only the latter possibilities $\vartheta_2-\vartheta_1=\pi$ and $\vartheta_4-\vartheta_3=\pi$ are compatible with equations (P1) and (P2) which completes the proof.
\end{proof}
\section{Summation over roots of unity}
\label{sec:rootssummation}
The following theorem states that $N^{\text{th}}$ roots of unity sum up to zero.
\begin{theorem}[Summation of roots of unity]
\label{theo:roots}
Let $N\in\mathbb{N}$ be a natural number and consider the vectors of $N^{\text{th}}$ roots of unity $\bm{r}_{N,k}=e^{ik\bm{\theta}_N}$, where $\bm{\theta}_N=(\theta_1,...,\theta_N)$ and $\theta_j=\operatorname{arg}(\rho_{j,N})=2\pi j/N\in\mathbb{S}^1$ are angular variables and $k\in \mathbb{N}^{<N}$ is a natural number. Then for any natural number $k\in\mathbb{N}$ the sum of the roots of unity vanishes
$$
\bm{1}^T\bm{r}_{N,k}=\sum_{j=1}^Ne^{ikj2\pi/N}=0.
$$
\end{theorem}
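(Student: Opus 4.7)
The plan is to recognize the sum as a finite geometric series and then exploit the fact that $e^{i2\pi k}=1$ to collapse the numerator. Specifically, I would set $z=e^{i2\pi k/N}$, rewrite
\begin{equation*}
\bm{1}^T\bm{r}_{N,k}=\sum_{j=1}^N z^j,
\end{equation*}
and then apply the standard closed form for a geometric progression, namely $\sum_{j=1}^N z^j = z(z^N-1)/(z-1)$, provided $z\neq 1$. Since $z^N=e^{i2\pi k}=1$ for any integer $k$, the numerator vanishes identically, which immediately forces the sum to be zero.

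The only subtlety is the hypothesis ensuring that $z\neq 1$, i.e., that the denominator in the geometric series formula does not vanish. This is precisely why the statement constrains $k\in\mathbb{N}^{<N}$ (more properly $1\le k\le N-1$, so that $N\nmid k$): in this range $k/N\notin\mathbb{Z}$ and therefore $z=e^{i2\pi k/N}\neq 1$. I would state this as the first step, noting that if instead $k$ were a multiple of $N$, every term in the sum would equal $1$ and the sum would be $N$, not zero; so the nontrivial content of the theorem lies precisely in $N\nmid k$.

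Putting the argument in order: first verify $z\neq 1$ under the given hypothesis; second apply the geometric series identity to obtain a single quotient; third compute $z^N=1$ so the numerator vanishes; fourth conclude that the sum is zero. The whole proof is only a few lines and there is no real obstacle beyond being careful about the edge case excluded by the hypothesis. Geometrically, the argument reflects the fact that the $N$-th roots of unity form the vertices of a regular $N$-gon centered at the origin, so their centroid — and hence their sum — vanishes; raising each to the $k$-th power with $N\nmid k$ merely permutes them (possibly with multiplicity), leaving a rescaled regular polygon whose centroid is still at the origin. I would mention this geometric picture only as intuition, and rely on the algebraic geometric-series calculation for the actual proof.
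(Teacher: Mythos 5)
Your proof is correct and takes essentially the same route as the paper: the paper's step of multiplying by $e^{ik2\pi/N}$ and subtracting is just an inline derivation of the geometric-series closed form you quote directly, and both arguments hinge on the same two facts, namely $z\neq 1$ (guaranteed by $k<N$) and $z^N=e^{i2\pi k}=1$. Your explicit remark about the excluded case $N\mid k$ is a nice touch but does not change the substance.
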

\begin{proof}
Multiplying both sides of the equation by $e^{ik2\pi/N}$ and subtracting the result from the original equation yields $(1-e^{ik2\pi/N})\bm{1}^T\bm{r}_{N,k}=\sum_{j=1}^Ne^{ikj2\pi/N}-\sum_{j=1}^Ne^{ik(j+1)2\pi/N}$. Now one can solve for $\bm{1}^T\bm{r}_{N,k}$, since $k<N$ and thus $1-e^{ik2\pi/N}\neq 0$ obtaining
\begin{align*}
\bm{1}^T\bm{r}_{N,k}&=\frac{\sum_{j=1}^Ne^{ikj2\pi/N}-\sum_{j=1}^Ne^{ik(j+1)2\pi/N}}{1-e^{ik2\pi/N}}\\
&=e^{ik2\pi/N}\frac{1-e^{ik2\pi}}{(1-e^{ik2\pi/N})}=0,
\end{align*}
where $e^{ik2\pi}=1,~\forall k \in\mathbb{Z}$ was used in the last step.
\end{proof}
\section{Balanced states in cycle graphs with loops}
\label{sec:balnonzeroloops}
\begin{figure}[tb!]
\centering
\includegraphics[width=.3\textwidth]{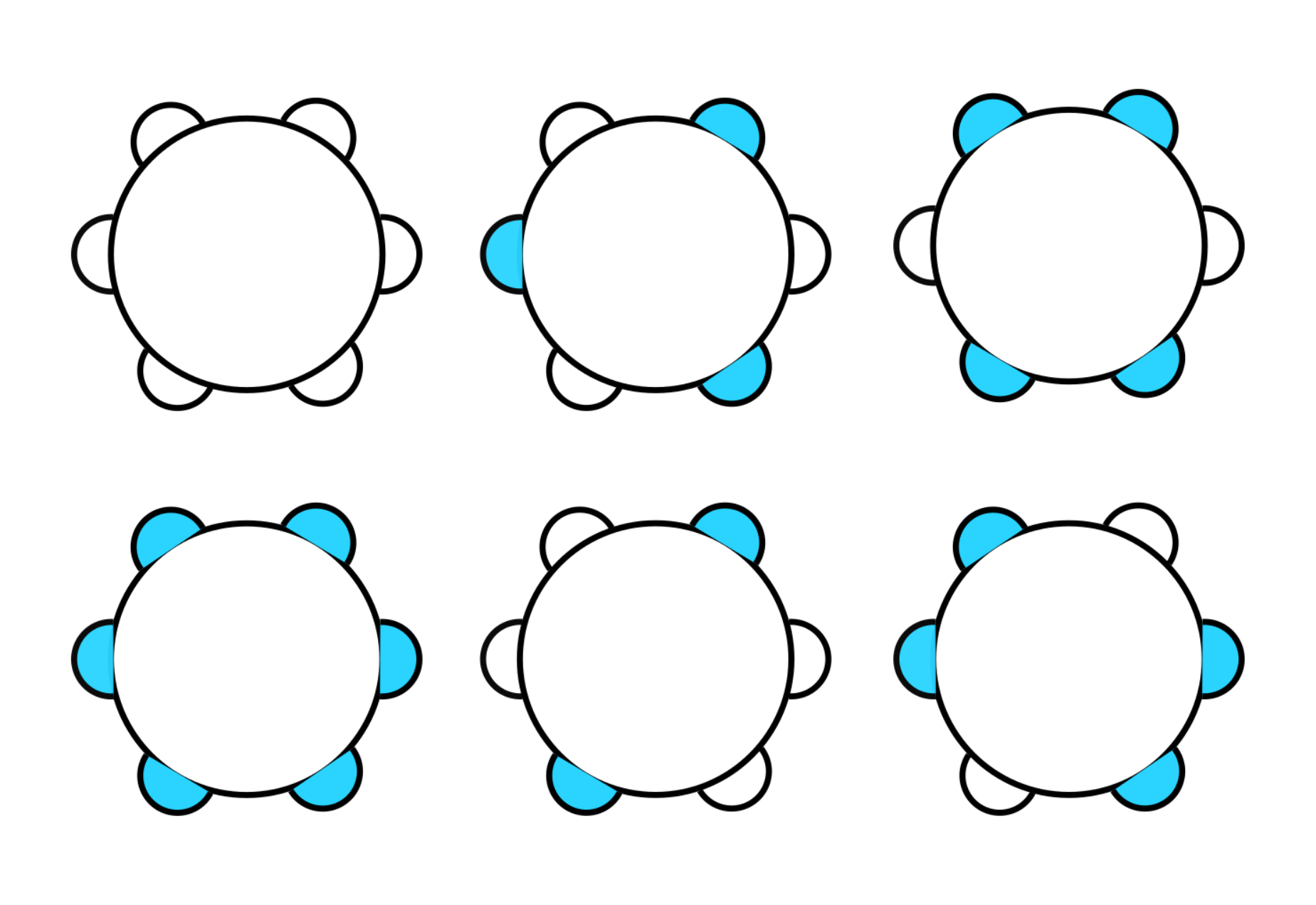}
\caption{Sketch of a cycle graph with $k=6$ loops. The winding numbers in the loops for balanced fixed points are not limited to be all-zero (top left) or all unity (bottom left, unity value indicated by colour), but also patterns where the loop winding numbers are prime divisors of $k$ may result in balanced fixed points (centre and right plots).}
\label{fig:SummationBalancing}
\end{figure}
In this section, we want to discuss the balanced states in cycle graphs with loops in more detail. Here, we will focus on the case where loops consist of $m=3$ vertices added to the graph for all loops such that the winding number of the loops characterizing stable fixed points reads $q_{l,i}\in\{-1,0,1\}$. To characterize these winding numbers, we will make use of results on the geometry of balanced states discussed in section~\ref{sec:phasebalancing}.
As an example, consider a cycle graph with $k=6$ loops. In this case, states with $\bm{q}_l\in\{(1,0,0,1,0,0)^T,(1,0,1,0,1,0)^T\}$ representing the loop number's prime factors can lead to balanced states as indicated in Fig.~\ref{fig:SummationBalancing}. To characterize these states on more mathematical grounds and express their relationship to the central loop's winding
number $q_1$, lets define the set of factors of the number of loops $k$ whose factors are not a divisor of some integer $q$ by $$R_{Q_l}(k,q)=\{p_1\in\mathbb{N}^{<k}\big|\exists p_2\in\mathbb{N}^{<k}:p_1\cdot p_2=k\wedge p_2\nmid q\},$$
where the expression $p_1\nmid q$ expresses the fact that $p_1$ is not a divisor of $q$.
Using this idea, one may write down the set of all loops' winding numbers leading to balanced states 
\begin{align*}
    Q_{l,2}^{>0}(q,k)&=\{\bm{q}_l\in\mathbb{D}_{l,2}^{>0}(\bm{q}_l)\big|\\ 
    &\bm{q}_l\in\operatorname{span}(\sigma^{p}((\underbrace{1,0,0,..}_{p_1 },...,\underbrace{1,0,0,..}_{p_1})^T),\\
    &p\in\{1,...,p_1-1\}),p_1\in R_{Q_l}(k,q)\},
\end{align*}
where $\sigma^{p}$ denotes again the cyclic shift applied $p$ times and the $^{>0}$ superscript indicates that this set is restricted to positive winding numbers. Note that this set only contains symmetric states if the integer $p_1$ is an integer divisor of the number of loops $k$ such that $p_1\big| k\in\mathbb{N}$, otherwise $R_{Q_l}$ is the empty set. 
The cyclic shift represents the graph's rotational symmetry. In general, the sign of the loops' possible winding numbers needs to coincide with the sign of the central cycle's winding number. 
Thus, one may classify the balanced fixed points with positive winding numbers by 
$$
Q_l^{>0}(k)=\{\bm{q}\in\mathbb{D}_{l}(\bm{q})\big|q_1\in Q^{>0}_{c}(k), \bm{q}_l\in Q^{>0}_{l,2}(q_1)\}.
$$
 Here, $Q_c=\{\bm{q}\in\mathbb{D}_c(\bm{q})\big|q_1=q_2,q_1\neq 0\}$ is the set defining the possible winding numbers in the inner cycle as defined in section~\ref{sec:results}. The set of negative winding numbers may be defined correspondingly and represents balanced states as well. Note that we now take the set of loop winding numbers in dependence of the winding number in the inner cycle $q_1$ $Q^{>0}_{l,2}(q_1)$ such that the main cycle's winding number induces the possible factors for the loops' winding numbers and thus their symmetry properties. The cases where $p_2|q_1$ in the set of factors $R_{Q_l}$ induces again a global symmetry similar to the one shown in Fig.~\ref{fig:SummationBalancing} b and the resulting states are not balanced. 
Note that this definition only characterizes balanced states for $m<5$ and needs to be redefined otherwise.

 In some cases, the combination of positive main winding number and negative loop winding numbers $q_1\in Q^{>0}_{c}(k), \bm{q}_l\in Q^{<0}_{l,2}(q_1)$ and vice versa leads to balanced states as well, that do, however, not always exist. In most cases, these states do exist for $q_1=1$. Nonetheless, these fixed points might possess phase differences larger than $\pi/2$, in particular for small graphs. Take for example the cycle graph $C_{40}$ to which $k=4$ loops have been added. Then the state
 characterized by $\bm{q}=(1,-1,0,-1,0)^T$ is stable and balanced, but has two phase differences larger than the  threshold $\varphi\approx\pi/2+0.141>\pi/2$. For this reason, these states may not be studied using the algorithm for finding stable, balanced states, but will be accounted for by the Monte Carlo method if they take a significant amount of the system's basin stability. However, most of the states with non-zero loop numbers occupy only a small share of basin stability in practice and may thus be neglected when
 calculating the graph's balancing ratio. 

Even though almost complete, this set still doesn't fully account for the balanced states. The stable fixed points that appear through zero-energy edges in higher order winding numbers of the corresponding cycle graph need to be included when summing over all balanced states. These states occur if the factor between the number of loops $k$ and the overall number of vertices in the new graph $N^*=N+k\cdot m$ is a multiple of the number of edges added in each loop $N^*/k=(m+1)\cdot p, p\in\mathbb{N}$. They are characterized by all equal, non-zero winding numbers in all loops $q_{l,j}\neq 0,~\forall j,$ and a non-zero winding number $q_1$ in the central cycle with the same sign $\operatorname{sign}(q_1)=\operatorname{sign}(q_{l,j})$ which reads $q_1=\operatorname{sign}\cdot(q_{l,j})(p-1)\cdot k$, see for example Fig.~\ref{fig:cyclebalanced} b. 
\end{document}